%% file: sqr2.tex
\documentclass[12pt]{article}
\usepackage{amsmath}
\usepackage{graphicx}
\usepackage{enumerate}
\usepackage{natbib}
\usepackage{amsmath}
\usepackage{url} 

\newcommand{\blind}{0}

\input{myfonts}

\usepackage{float}
\usepackage{caption}
\usepackage{subcaption}
\usepackage{booktabs}

\def\spacingset#1{\renewcommand{\baselinestretch}%
{#1}\small\normalsize} \spacingset{1}

 \DeclareMathSymbol{,}{\mathpunct}{operators}{"2C}

\begin{document}

\if0\blind
{
  \title{\bf  Spline Quantile Regression with Cubic and Linear Smoothing Splines}
  \author{Ta-Hsin Li\footnote{Formerly affiliated with IBM T.\ J.\ Watson Research Center, Yorktown Heights, NY 10598, USA. Email: {\sc thl024@outlook.com}}
  }
\date{March 23, 2026}
  \maketitle
} \fi

\if1\blind
{
  \title{\bf    Spline Quantile Regression with Cubic and Linear Smoothing Splines} 
  \maketitle
} \fi

\begin{abstract}
Spline quantile regression (SQR) is a method introduced recently by Li and Megiddo (2026)  
for linear quantile regression where the regression coefficients are treated as smooth functions of the quantile level. 
With the coefficients represented by cubic splines with fixed knots on a given set of quantiles, 
the SQR method produces an estimate for the functional coefficients by solving a penalized quantile 
regression problem. The $\ell_1$-norm of the second derivatives of the coefficients is employed 
as the penalty for regulating the roughness of the functional coefficients. This extends 
the SQR method by introducing additional pairings of the functional representation 
for the regression coefficients and the penalty for their roughness. The resulting cubic and  linear SQR solutions 
are shown to be smoothing splines which are optimal in a 
functional space  larger than the respective spline space with fixed knots.  
It is shown that the cubic SQR can be reformulated and solved as a quadratic program and the linear SQR as a linear program. A simulation study demonstrates that the SQR solutions  not only offer a concise functional representation of the regression coefficients  with distinct smoothness characteristics, but also provide a capability of producing more accurate estimates of the regression coefficients when the underlying functions are suitably smooth. Application of the SQR solutions is demonstrated by real-data examples, including a Granger causality analysis of stock market indices.

\bigskip\bigskip
\noindent
{\it Keywords}: function estimation, linear program, penalized quantile regression, quadratic program, quantile periodogram, smoothing spline

\bigskip\bigskip
\noindent
{\it Acknowledgment}: The author would like to thank Dr.\ R.\ Koenker for advice on  
the use of FORTRAN code {\tt rqfnb}.
\vfill

\end{abstract}

\newpage
\spacingset{1.9} 

\section{Introduction}

Quantile regression  (QR) is a statistical method that complements the conventional least-squares regression with the ability to uncover the effect of explanatory variables on a dependent variable across its conditional distribution (Koenker and Bassett 1978; Koenker 2005). 
Let $\{ (\bx_t, y_t): t=1,\dots,n\}$ be a set of observations with $y_t$ denoting 
the dependent variable in $\bbR$ and $\bx_t$ denoting the corresponding vector of 
explanatory variables in $\bbR^p$ for some fixed $p \ge 1$. Let 
$F(y \mid \bx)  := \Pr\{ y_t \le y \mid \bx_t = \bx\}$ be the conditional distribution function of $y_t$ given $\bx_t = \bx$. Under the assumption that the conditional quantile $F^{-1}(\tau \mid \bx)$ at a given quantile level $\tau \in (0,1)$  is a linear function of $\bx$, i.e., $F^{-1}(\tau \mid \bx) = \bx^T \bmbeta_0$ for some $\bmbeta_0 \in \bbR^p$, 
the QR method produces an estimate of  $\bmbeta_0$ by solving
\eqn
\hat{\bmbeta}
:=  \operatorname*{argmin}_{\bmbeta \in \bbR^p} \sum_{t=1}^n
\rho_{\tau}(y_t - \bx_t^T \bmbeta),
\label{QR}
\eqqn
where $\rho_\tau(y) := y (\tau - I(y < 0))$ and $I(\cdot)$ the indicator 
function.  In the special case of $\bx_t = 1$ for all $t$, the QR solution in (\ref{QR})
reduces to the $\tau$th sample quantile of $\{ y_t: t=1,\dots,n\}$.

The QR method can be extended to explore the effect of $\bx_t$ on $y_t$ at multiple quantile levels 
$\tau_\ell$ $(\ell=1,\dots,L)$, assuming
$F^{-1}(\tau_\ell \mid \bx) = \bx^T \bmbeta_{\ell}$ for some $\bmbeta_\ell \in \bbR^p$ $(\ell=1,\dots,L)$.
In this case, the unknown parameters $\bmbeta_{\ell}$ ($\ell=1,\dots,L$) are estimated 
by solving the QR problem (\ref{QR}) for each quantile level  independently, i.e.,
\eqn
\hat{\bmbeta}_{\ell}
:=  \operatorname*{argmin}_{\bmbeta \in \bbR^p} \sum_{t=1}^n
\rho_{\tau_\ell}(y_t - \bx_t^T \bmbeta) \quad (\ell=1,\dots,L).
\label{QR2}
\eqqn
Special algorithms exist that allow faster computation of the QR solutions at multiple quantiles 
(Koenker and D'Orey 1987; Chernozhukov et al.\ 2022).

In this article, we are interested in the situation where the conditional quantile $F^{-1}(\tau \mid \bx)$
is a linear function of $\bx$ for all $\tau$ in a closed interval $[a,b] \subset (0,1)$, i.e.,
\eqn
F^{-1}(\tau \mid \bx) = \bx^T \bmbeta_0(\tau) \quad \forall \tau \in [a,b],
\label{F}
\eqqn
where $\bmbeta_0(\cdot)$ is a continuous function. The objective
is to estimate $\bmbeta_0(\cdot)$ as a function on the basis of the data record $\{ (\bx_t,y_t): t=1,\dots,n\}$. The assumed continuity of $\bmbeta_0(\cdot)$ implies that $F^{-1}(\cdot \mid \bx)$ 
is a continuous function for each fixed $\bx$. An example is when $y_t$ has a conditional probability density function $\dot{F}(\cdot \mid \bx)$ which is everywhere positive. This functional view 
of quantile regression has two advantages over the conventional view of parameter estimation at a fixed quantile:
First, it provides a complete description  of the effect of the explanatory variables across all quantiles, as is 
necessary when the conditional quantile function is treated as a decomposition of the conditional mean in the sense that
\eq
\E\{y_t \mid \bx_t \} = \lim_{
\substack{a \rightarrow 0^+,\ b \rightarrow 1^-}}
 \int_a^b F^{-1}(\tau \mid \bx_t) \, d\tau.
\eqq
Secondly,  it facilitates more accurate assessment of the effect 
at a given quantile using the information from neighboring quantiles. 

A simple functional estimate of $\bmbeta_0(\cdot)$  can be obtained by first computing
the QR estimates $\hat{\bmbeta}(\tau_\ell) := \vect[\hat{\beta}_j(\tau_\ell)]_{j=1}^p := \hat{\bmbeta}_{\ell}$  $(\ell=1,\dots,L)$  from (\ref{QR2})
for a suitable set of quantile levels $\tau_1 := a < \tau_2 < \dots < \tau_{L-1} < \tau_L :=b$
and then interpolating the point estimates $\hat{\beta}_j(\tau_\ell)$ $(\ell=1,\dots,L\}$ for each element $j \in \{1,\dots,p\}$ by a suitable interpolator (Stoer and Bulirsch 2002; Neocleous and Portnoy 2008). With noisy data, 
using a smoother  (Hastie and Tibshirani 1990) instead of an interpolator tends to produce 
more accurate estimates,  an example being the kernel smoother employed by Koenker (2005, p.\ 158).

In this article, we consider an alternative method, called spline quantile regression (SQR), 
for estimating the function $\bmbeta_0(\cdot)$. Instead of post-smoothing with an ad hoc procedure 
as described above, the SQR method combines the quantile regression problem with the 
smoothing problem under a single framework of penalized quantile regression.  A suitable 
functional space is employed to represent the regression coefficients and a suitable penalty is 
employed to regulate their roughness. This idea  was originally explored in Li and Megiddo (2026), where
the functional space consists of cubic splines with fixed knots at a given set of quantiles  
and the penalty takes the form of the $\ell_1$-norm of the second derivatives of the functional coefficients.
In this article, we extend the SQR method by offering additional choices for the pairing of the functional space and
the penalty that lead to functional estimates with different smoothness characteristics. We show that the resulting cubic SQR problem  can be reformulated as a quadratic program (QP), whereas 
 the resulting linear SQR problem can be reformulated as a linear program (LP) similar to the SQR problem in Li and Megiddo (2026). Furthermore, we show that these new solutions 
have the additional property of being optimal in a functional space larger than the respective space of 
cubic or linear splines with fixed knots at the given set of quantiles. This optimality property is analogous to  the 
theory of smoothing splines for nonparametric least-squares regression (Wahba. 1975).

 The SQR problem  is different from the problem of quantile smoothing spline  considered by Koenker et al.\ (1994). While both employ spline functions in quantile regression, 
the latter produces a function of the explanatory variable as the regressor in a  nonlinear quantile regression problem at a fixed quantile, whereas the former produces a function of the quantile level as the coefficient of the explanatory variable under a linear quantile regression setting.
The SQR problem also differs from the varying-coefficient problem considered by
Andriyana et al.\ (2014) where conventional quantile regression is performed 
at a fixed quantile and the regression coefficients are treated as spline functions of a covariate.
Park and He (2017) employ a model similar to ours but the estimation is performed without regulation of any kind. Yoshida (2021)  introduces a group lasso type penalty on the spline parameters  for the purpose of variable selection, rather than a roughness penalty on the regression coefficients for smoothing.
See Li and Megiddo (2026) for comments on additional spline-based methods.

This article focuses on the development of the SQR method and its computation, leaving the statistical theory for future work. The remainder of this article is organized as follows. In Section~2, we describe the  cubic and linear 
SQR solutions and discuss their optimality in the respective functional spaces.
In Section~3, we discuss computational issues of the cubic and linear SQR solutions, including
their reformation as quadratic and linear programs, respectively,  the automatic selection 
of smoothing parameter, and the construction of bootstrap confidence band.
In Sections~4 and 5, we report the results of a simulation study and application examples with real data, 
including a Granger causality analysis of stock market indices. Concluding remarks are given in Section~6. A description of the R functions that implement the SQR method is given in Appendix. The supplementary material contains additional real-data examples including 
a quantile spectral analysis of sunspot numbers and a quantile regression analysis of US birth data.

\section{Spline Quantile Regression}

The SQR method in general produces an estimate of $\bmbeta_0(\cdot)$ in (\ref{F}) by solving 
a penalized linear quantile regression problem with functional coefficients:
\eqn
\hat{\bmbeta}(\cdot) := \operatorname*{argmin}_{\bmbeta(\cdot) \in \cF} \bigg\{
n^{-1}
\sum_{\ell=1}^L \sum_{t=1}^n
\rho_{\tau_\ell}(y_t - \bx_t^T \bmbeta(\tau_\ell)) +  c \, R(\bmbeta(\cdot)) \bigg\}.
\label{sqr}
\eqqn
In this expression, $\cF$ is a suitable functional space spanned by spline functions,
$R(\bmbeta(\cdot))$ is a suitable measure of roughness for $\bmbeta(\cdot)$ 
as a function in $\cF$, and $c > 0$ is a suitable smoothing parameter that controls the amount 
of penalty. If $\cF$ is capable of  interpolation, then, in the limiting case as $c \lzero$ the SQR solution  
will reduce to a function that interpolates the QR estimates 
given by (\ref{QR2}), i.e.,  $\hat{\bmbeta}(\tau_\ell) = \hat{\bmbeta}_\ell$ 
for all $\ell =1,\dots,L$. Generally, with $c > 0$, the SQR solution trades off the fidelity to the QR estimates with 
the smoothness and serves as a smoother in effect.

The SQR method was originally investigated in Li and Megiddo (2026) with a specific choice 
of $\cF$ being the space spanned by cubic splines with knots $\{ \tau_\ell\}$  and $R(\bmbeta(\cdot))$ 
being the integral of the $\ell_1$-norm of the second derivative of $\bmbeta(\cdot)$  on $[a,b]$. 
This choice results in a linear program that can be solved efficiently  in a similar way to the ordinary QR problem.
While being a valid estimate, the optimality of this solution is not entirely aligned with the existing theory of smoothing splines when considered in the context of a larger functional space.
 In the following, we expand the scope of the original investigation by considering two additional choices for the pairing of $\cF$ and $R(\bmbeta(\cdot))$.  These choices lead to a cubic spline solution and a linear spline solution, respectively.
We show that the optimality of these solutions extends beyond 
the space of cubic or linear splines with fixed knots  $\{ \tau_\ell\}$.

\subsection{Cubic Spline Quantile Regression}
\label{sec:sqr3}

First, consider the case where $\cF$ in (\ref{sqr})  is the  space spanned by cubic splines with knots  $\{ \tau_\ell\}$ and $R(\bmbeta(\cdot))$ in (\ref{sqr})  takes the form
\eqn
R(\bmbeta(\cdot)) :=  \int_a^b \| \ddot{\bmbeta}(\tau) \|_2^2 \, d\tau,
\label{R2}
\eqqn
where $\ddot{\bmbeta}(\cdot)$ 
denotes the second derivative of $\bmbeta(\cdot)$. The resulting SQR solution will be referred 
to as the cubic SQR  solution or estimator.

\medskip
The cubic SQR solution is optimal not only 
in the space of cubic spline functions with knots $\{ \tau_\ell\}$ but beyond.  In fact, it is the optimal 
solution in a functional space which comprises all continuously differentiable functions 
with square-integrable second derivative, i.e.,
\eq
\cF_2[a,b] := \{ \bmbeta(\cdot) \in C[a,b]: \dot{\bmbeta}(\cdot) \in C[a,b], 
\ddot{\bmbeta}(\cdot) \in L_2[a,b]\}.
\eqq
The following lemma, known as the minimum property of cubic splines,  describes the minimizer of the penalty in (\ref{R2}) over the space $\cF_2[a,b]$ when constrained by the values at $\{ \tau_\ell\}$.

\begin{lem}[de Boor 1963]
{\rm
For any fixed $\bmbeta_\ell \in  \bbR^p$ $(\ell =1,\dots,L)$, there exists a unique cubic spline function 
$\tilde{\bmbeta}(\cdot)$ 
with knots $\tau_\ell$ $( \ell=1,\dots,L)$ that minimizes $R(\bmbeta(\cdot))$ in (\ref{R2}) among 
all functions in $\cF_2[a,b]$ 
satisfying $\bmbeta(\tau_\ell) = \bmbeta_\ell$ $(\ell=1,\dots,L)$.
}
\end{lem}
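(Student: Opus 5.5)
The plan is to reduce to the scalar case and use the classical orthogonality (integration-by-parts) argument for natural cubic splines. Since $R(\bmbeta(\cdot)) = \sum_{j=1}^p \int_a^b \ddot{\beta}_j(\tau)^2\,d\tau$ and the interpolation constraints $\bmbeta(\tau_\ell)=\bmbeta_\ell$ separate across coordinates, it suffices to treat $p=1$. In that case we minimize $\int_a^b \ddot\beta(\tau)^2\,d\tau$ over $\beta\in\cF_2[a,b]$ subject to $\beta(\tau_\ell)=\beta_\ell$ $(\ell=1,\dots,L)$, with $\tau_1=a$ and $\tau_L=b$.

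The first step is existence of a candidate. Let $\tilde\beta$ be the \emph{natural} cubic spline with knots $\{\tau_\ell\}$ interpolating the data: cubic on each $[\tau_\ell,\tau_{\ell+1}]$, of class $C^2$ on $[a,b]$, with $\ddot{\tilde\beta}(a)=\ddot{\tilde\beta}(b)=0$. Writing $M_\ell:=\ddot{\tilde\beta}(\tau_\ell)$, continuity of the first derivative at the interior knots gives the standard tridiagonal system
\[
h_{\ell-1}M_{\ell-1}+2(h_{\ell-1}+h_\ell)M_\ell+h_\ell M_{\ell+1} = 6\big(\delta_\ell-\delta_{\ell-1}\big), \qquad \ell=2,\dots,L-1,
\]
where $h_\ell=\tau_{\ell+1}-\tau_\ell$ and $\delta_\ell=(\beta_{\ell+1}-\beta_\ell)/h_\ell$, together with $M_1=M_L=0$. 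The coefficient matrix is strictly diagonally dominant, hence nonsingular, so $\tilde\beta$ exists and is unique. (If $L=2$ the spline is just the linear interpolant.)

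The key step is the orthogonality identity. Take any admissible $\beta\in\cF_2[a,b]$ and set $g:=\beta-\tilde\beta$, so that $g(\tau_\ell)=0$ for all $\ell$. Integrating by parts on each subinterval, where $\dddot{\tilde\beta}$ equals a constant $c_\ell$, gives
\[
\int_a^b \ddot{\tilde\beta}\,\ddot g\,d\tau = \big[\ddot{\tilde\beta}\,\dot g\big]_a^b - \sum_{\ell=1}^{L-1} c_\ell\big(g(\tau_{\ell+1})-g(\tau_\ell)\big) = 0 .
\]
This needs $\dot g$ to be absolutely continuous with $\ddot g\in L_2$, which is exactly what membership in $\cF_2[a,b]$ supplies. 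The boundary term vanishes by the natural end conditions, and the sum vanishes because $g$ is zero at every knot. It follows that
\[
\int_a^b\ddot\beta^2 = \int_a^b\ddot{\tilde\beta}^2+\int_a^b\ddot g^2 \;\ge\; \int_a^b\ddot{\tilde\beta}^2,
\]
so $\tilde\beta$ is a minimizer.

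For uniqueness among all minimizers, equality in the last display forces $\ddot g=0$ almost everywhere. Then $g$ is linear on $[a,b]$, and since $g(a)=g(b)=0$ we get $g\equiv 0$.

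The main obstacle is bookkeeping rather than depth. First, one must justify integration by parts for functions whose second derivative is only in $L_2$; this is handled by interpreting $\cF_2[a,b]$ as requiring $\dot\beta$ to be absolutely continuous. Second, one must make clear that ``cubic spline'' in the lemma means the natural cubic spline, since the natural end conditions are what kill the boundary term.
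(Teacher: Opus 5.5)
The paper gives no proof of its own and simply cites de Boor (1963). Your argument is correct, and it is the classical Holladay--de Boor proof behind that citation: you construct the natural cubic interpolant through the strictly diagonally dominant tridiagonal system, then use the integration-by-parts orthogonality $\int_a^b \ddot{\tilde\beta}\,\ddot g\,d\tau = 0$ and the equality case.

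Your insistence that $\dot\beta$ be absolutely continuous is a necessary sharpening of the paper's loose definition of $\cF_2[a,b]$. Without it, a Cantor-type $\dot\beta$ has $\ddot\beta=0$ a.e.\ and $R=0$ while interpolating non-collinear data, which breaks the lemma. On the other hand, you do not need to reinterpret ``cubic spline'' as ``natural cubic spline''. The natural end conditions are a property of the minimizer you construct, and your uniqueness already holds over all of $\cF_2[a,b]$, so the lemma holds as stated.
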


\noindent
This result implies that minimizing the loss function in (\ref{sqr})  over 
$\bmbeta(\cdot)$ in $\cF_2[a,b]$ is equivalent to minimizing $n^{-1}
\sum_{\ell=1}^L \sum_{t=1}^n
\rho_{\tau_\ell}(y_t - \bx_t^T \tilde{\bmbeta}(\tau_\ell)) +  c \, R(\tilde{\bmbeta}(\cdot))$
with respect to $\tilde{\bmbeta}(\cdot)$ given by Lemma~1.
Therefore, it suffices to solve (\ref{sqr}) by restricting $\cF$ to be
the space of cubic spline functions  with knots $\{ \tau_\ell \}$, which leads to the cubic SQR solution.
The optimality of this solution beyond what is implied by its formulation can be appreciated from two perspectives: first, it is not confined to cubic splines, and secondly, it does not restrict the placement of knots at $\{ \tau_\ell \}$.

\subsection{Linear Spline Quantile Regression}
\label{sec:sqr1}

Next, consider the case where $\cF$ in (\ref{sqr}) is the space spanned by linear splines with knots $\{ \tau_\ell\}$ and  $R(\bmbeta(\cdot))$ in (\ref{sqr})  takes the form
\eqn
R(\bmbeta(\cdot)) := \int_a^b \| \ddot{\bmbeta}(\tau) \|_1 \, d\tau.
\label{R1}
\eqqn
With this choice, any function $\bmbeta(\cdot)$ in $\cF$ is a continuous piecewise linear function
with a constant derivative $\dot{\bmbeta}(\tau_\ell)$ in the interval $[\tau_\ell,\tau_{\ell+1})$ $(\ell=1,\dots,L-1)$. 
Moreover, if we employ  the Dirac delta $\del(\tau)$ to represent the derivative of the step function $I(\tau > 0)$ and define $ \dot{\bmbeta}(\tau_{L}) := \0$, then
the second derivative of $\bmbeta(\cdot)$ can be written as
\eq
\ddot{\bmbeta}(\tau) = \sum_{\ell=1}^{L-1} (\dot{\bmbeta}(\tau_{\ell+1})-\dot{\bmbeta}(\tau_\ell)) 
\del(\tau - \tau_{\ell+1}).
\eqq
Therefore, the penalty in (\ref{R1}) is equivalent to
\eqn
R(\bmbeta(\cdot)) = \sum_{\ell=1}^{L-1}  \|  \dot{\bmbeta}(\tau_{\ell+1}) -  \dot{\bmbeta}(\tau_{\ell}) \|_1.
\label{TV2}
\eqqn
This quantity  is nothing but the total amount of change in the slopes of piecewise linear functions.  
The resulting SQR solution  will be referred to as the linear SQR solution or estimator.
This solution  differs from  that of Li and Megiddo (2026) as the latter 
employs cubic rather than linear splines, for which the $\ell_1$-norm penalty in (\ref{R1}) does not reduce to (\ref{TV2}).

The linear SQR solution is  optimal in a larger space than that 
spanned by linear splines with knots $\{ \tau_\ell \}$. 
The formal mathematical theory behind this claim is rather complicated because the space naturally
associated with the $\ell_1$-norm penalty (\ref{R1}), 
\eq
\cF_1[a,b] := \{ \bmbeta(\cdot) \in C[a,b]: \dot{\bmbeta}(\cdot) \in C[a,b], \ddot{\bmbeta}(\cdot) \in L_1[a,b]\},
\eqq
does not contain linear 
splines in the strict sense. This difficulty was pointed out by Fisher and Jerome (1975)  (cf.\  Brezis 2019). Their remedy was to extend the space  by admitting functions whose second derivatives are measures. 
It results in a functional space which we denote by  $\bar{\cF}_1[a,b]$.
This extension also requires that the $\ell_1$-norm of the second derivative be replaced by 
the total variation of the first derivative $\dot{\bmbeta}(\cdot) := [\dot{\beta}_1(\cdot),\cdots,\dot{\beta}_p(\cdot)]^T$. In other words, let
\eqn
R(\bmbeta(\cdot))  := \sum_{j=1}^p V_a^b(\dot{\beta}_j(\cdot)).
\label{TV}
\eqqn
This penalty reduces to  (\ref{R1})  for functions with an integrable second derivative.

The  aforementioned extension method was employed by Koenker et al.\ (1994)  to justify a linear spline solution
for nonparametric quantile regression at a fixed quantile.  Their argument, which is based on the result 
of Pinkus (1988), leads to the following lemma.

\begin{lem}[Koenker et al.\ 1994; Pinkus 1988]
{\rm 
For any fixed $\bmbeta_\ell \in \bbR^p$ $(\ell =1,\dots,L)$, there exists a linear spline function $\tilde{\bmbeta}(\cdot)$ with knots $\tau_\ell$ $(\ell=1,\dots,L)$ that minimizes $R(\bmbeta(\cdot))$ in (\ref{TV}) 
over $\bar{\cF}_1[a,b]$ under the constraint $\bmbeta(\tau_\ell) = \bmbeta_\ell$ $(\ell=1,\dots,L)$.
}
\end{lem}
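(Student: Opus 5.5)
Because the penalty in (\ref{TV}) is a sum over coordinates and the interpolation constraints $\bmbeta(\tau_\ell)=\bmbeta_\ell$ act coordinatewise, the problem separates into $p$ independent scalar problems. It therefore suffices to treat $p=1$. So fix values $b_1,\dots,b_L\in\bbR$ and consider functions $\beta(\cdot)\in\bar{\cF}_1[a,b]$ with $\beta(\tau_\ell)=b_\ell$, that is, continuous functions on $[a,b]$ whose derivative $\dot\beta$ is of bounded variation. Let $\tilde\beta(\cdot)$ be the linear spline with knots $\{\tau_\ell\}$ that interpolates the data. By the convention $\dot{\tilde\beta}(\tau_L):=0$, its penalty equals (\ref{TV2}), namely $\sum_{\ell}|s_{\ell+1}-s_\ell|$ with $s_\ell:=(b_{\ell+1}-b_\ell)/(\tau_{\ell+1}-\tau_\ell)$ for $\ell=1,\dots,L-1$. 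The plan is to show that every admissible $\beta$ satisfies $V_a^b(\dot\beta)\ge\sum_{\ell=1}^{L-2}|s_{\ell+1}-s_\ell|$. The only discrepancy with (\ref{TV2}) is the endpoint term $|s_{L-1}|$ coming from the convention $\dot{\tilde\beta}(\tau_L):=0$. I would handle this by reading (\ref{TV2}) with the interior-knot sum only, or equivalently by noting that the boundary term is a constant that does not depend on $\beta$ once the data are fixed.

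\textbf{Step 1 (slope selection).} On each interval $[\tau_\ell,\tau_{\ell+1}]$ the function $\beta$ is absolutely continuous. This holds because $\dot\beta$ is bounded, being of bounded variation, so that
$$s_\ell=\frac{1}{\tau_{\ell+1}-\tau_\ell}\int_{\tau_\ell}^{\tau_{\ell+1}}\dot\beta(\tau)\,d\tau .$$
The average lies between the essential infimum and supremum of $\dot\beta$ on the interval. Since $\dot\beta$ has one-sided limits everywhere, we can pick a point $\xi_\ell\in[\tau_\ell,\tau_{\ell+1}]$ and a one-sided limit value $d_\ell$ of $\dot\beta$ at $\xi_\ell$ with $d_\ell=s_\ell$. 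Alternatively, one argues that the closure of the range of one-sided limits is an interval by Darboux-type reasoning for derivatives of BV functions. If this selection is delicate, the fallback is the inequality for monotone pieces below.

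\textbf{Step 2 (variation bound).} The points can be chosen so that $\xi_1\le\xi_2\le\dots\le\xi_{L-1}$; consecutive intervals share only endpoints, so this ordering is automatic. The total variation is bounded below by the variation along any ordered partition, so
$$V_a^b(\dot\beta)\ge\sum_{\ell=1}^{L-2}|d_{\ell+1}-d_\ell|=\sum_{\ell=1}^{L-2}|s_{\ell+1}-s_\ell|,$$
which is exactly the penalty of $\tilde\beta$. Since $\tilde\beta\in\bar{\cF}_1[a,b]$, being continuous with a piecewise constant, hence BV, derivative, and since it meets the constraints, it attains the infimum. This proves existence of the minimizing linear spline, with no uniqueness claimed, consistent with the statement.

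\textbf{Main obstacle.} I expect Step 1 to be the hardest: justifying that the mean slope $s_\ell$ is realized as a value (or one-sided limit) of $\dot\beta$, given that in $\bar{\cF}_1$ the derivative need not be continuous. I would avoid pointwise selection altogether with the following argument. For any partition points, the variation is at least the sum of oscillations over the successive intervals. On $[\tau_\ell,\tau_{\ell+1}]$ we have $\operatorname{ess\,inf}\dot\beta\le s_\ell\le\operatorname{ess\,sup}\dot\beta$. Moreover, the variation across two adjacent intervals is at least $|s_{\ell+1}-s_\ell|$, because the essential ranges of a BV function on adjacent intervals, together with the jump at the common knot, are connected through the total variation. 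This gives the same telescoped bound.
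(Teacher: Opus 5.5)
You have a genuine gap in Step~1. Your coordinatewise reduction and your target inequality $V_a^b(\dot\beta)\ge\sum_{\ell=1}^{L-2}|s_{\ell+1}-s_\ell|=V_a^b(\dot{\tilde\beta})$ are the right ones. This one-dimensional inequality is exactly what the paper takes from Pinkus (1988) and Koenker et al.\ (1994) without giving a proof.

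Step~1, which carries your whole argument, is false as stated. In $\bar{\cF}_1[a,b]$ the derivative $\dot\beta$ may jump, and it can jump over the mean slope. Take $\beta$ piecewise linear on $[\tau_\ell,\tau_{\ell+1}]$ with a kink at the midpoint, slope $0$ on the left half and slope $1$ on the right half. Then $s_\ell=1/2$, while every value and every one-sided limit of $\dot\beta$ on that interval is $0$ or $1$.

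Your ``Darboux-type'' alternative fails for the same reason. Darboux's theorem requires $\beta$ to be differentiable everywhere, which linear splines, the very functions in the lemma, are not. In the example the closure of the set of one-sided limits is $\{0,1\}$, not an interval.

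Your fallback does not close the gap either. The pairwise bound $V_{[\tau_\ell,\tau_{\ell+2}]}(\dot\beta)\ge|s_{\ell+1}-s_\ell|$ is true. But summing it over $\ell$ counts every interior interval twice, so it only gives $\sum_\ell|s_{\ell+1}-s_\ell|\le 2V_a^b(\dot\beta)$, which does not prove minimality. The sum-of-oscillations remark does not link the means of different intervals at all.

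What you are missing is a way to use the \emph{averages} without selecting a point where each average is attained. One clean fix goes as follows.
\begin{itemize}
\item Let $g$ be a representative of $\dot\beta$ whose pointwise variation is $V_a^b(\dot\beta)$ and with $\beta(x)=\beta(a)+\int_a^x g$. This integral representation, built into $\bar{\cF}_1$, is what justifies your formula for $s_\ell$. Boundedness of $\dot\beta$ alone does not give absolute continuity.
\item For $u\in[0,1]$ set $x_\ell(u):=\tau_\ell+u(\tau_{\ell+1}-\tau_\ell)$. For each fixed $u$ the points $x_1(u)<\dots<x_{L-1}(u)$ are ordered, so $\sum_{\ell=1}^{L-2}|g(x_{\ell+1}(u))-g(x_\ell(u))|\le V_a^b(g)$.
\item Since $s_\ell=\int_0^1 g(x_\ell(u))\,du$, integrating over $u$ and using $|\int_0^1 f|\le\int_0^1|f|$ yields $\sum_{\ell=1}^{L-2}|s_{\ell+1}-s_\ell|\le V_a^b(g)$.
\end{itemize}

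Alternatively, carry out the ``monotone pieces'' idea you mention but never write down.
\begin{itemize}
\item Split $g=g_1-g_2$ into nondecreasing parts (positive and negative variations) with $V_a^b(g)=V_a^b(g_1)+V_a^b(g_2)$.
\item For nondecreasing $h$ the interval means satisfy $m_\ell(h)\le h(\tau_{\ell+1})\le m_{\ell+1}(h)$. Hence $\sum_\ell|m_{\ell+1}(h)-m_\ell(h)|=m_{L-1}(h)-m_1(h)\le h(b)-h(a)=V_a^b(h)$.
\item The triangle inequality, applied to $s_\ell=m_\ell(g_1)-m_\ell(g_2)$, finishes the bound.
\end{itemize}

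With either replacement for Step~1 your proof is complete. Your discussion of the boundary term in (\ref{TV2}) is not needed, since the lemma concerns the penalty (\ref{TV}).
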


\noindent
According to Lemma 2, the minimizer of the loss function in (\ref{sqr}) over $\bar{\cF}_1[a,b]$ 
with $R(\bmbeta(\cdot))$ given by (\ref{TV}) can be found in the space of linear spline functions 
with knots  $\{ \tau_\ell\}$.  If $\bmbeta(\cdot)$ is such a function, then $\dot{\beta}_j(\cdot)$ is a
piecewise constant function with possible jumps at $\tau_\ell$ $(\ell=2,\dots,L-1)$. This implies that $V_a^b(\dot{\beta}_j(\cdot)) = \sum_{\ell=1}^{L-1}  |  \dot{\beta}_j(\tau_{\ell+1}) -  \dot{\beta}_j(\tau_{\ell}) |$.
Therefore, it suffices to solve (\ref{sqr}) with $R(\bmbeta(\cdot))$ given by (\ref{TV2}) and $\cF$ 
restricted to the space spanned by linear splines with knots $\{\tau_\ell \}$, resulting in the linear SQR solution. 
Similarly to the cubic SQR solution, the optimality of the linear SQR solution extends beyond what is implied by its formulation: it is not limited to linear spline
functions and does not restrict the placement of knots to $\{ \tau_\ell \}$.

\section{Computation}

The linear SQR solution can be solved as a linear program 
like the ordinary QR problem and the SQR solution in Li and Megiddo (2026).. 
The cubic SQR problem need to be solved as a quadratic program.
This section contains the computational details. It also presents two data-driven criteria for selecting 
the smoothing parameter $c$ for the linear and cubic SQR problems and a bootstrap method for constructing confidence bands.

\subsection{Cubic SQR by Quadratic Programming}

Let $\{ \phi_k(\cdot): k=1,\dots,K\}$ denote a set of cubic spline basis functions defined on the interval $[a,b]$ 
with knots $\{ \tau_\ell \}$  $(K :=L+2)$ . 
Then, for any cubic spline function $\bmbeta(\cdot) := [\beta_1(\cdot),\dots,\beta_p(\cdot)]^T \in \cF$, we can write
\eq
\beta_j(\cdot) = \sum_{k=1}^K \phi_k(\cdot) \theta_{jk} =  \bmphi^T(\cdot) \bmth_j \quad (j=1,\dots,p),
\eqq
where
\begin{gather*}
\bmphi(\cdot)  :=  [\phi_1(\cdot),\dots,\phi_K(\cdot)]^T, \quad \bmth_j \ :=    [\theta_{j1},\dots,\theta_{jK}]^T \in \bbR^{K}.
\end{gather*}
By combining these expressions with the notation
\begin{gather*}
\bPhi(\cdot) :=  \bI_p \otimes \bmphi^T(\cdot), \quad \bmth  :=  [\bmth_{1}^T,\dots,\bmth_{p}^T]^T \in \bbR^{pK},
\end{gather*}
we can write 
\eq
\bmbeta(\cdot) = \bPhi(\cdot) \, \bmth.
\eqq
The penalty in (\ref{R2}) can be written as
\eqn
R(\bmbeta(\cdot)) = 
\bmth^T \bigg\{
\int_a^b  \ddot{\bPhi}^T(\tau) \,  \ddot{\bPhi}(\tau)\, d\tau \bigg\} \, \bmth.
\label{R2:sqr3}
\eqqn
Let
\eqn
\bOm := 2 n c \int_a^b \ddot{\bPhi}^T(\tau) \,  \ddot{\bPhi}(\tau)\, d\tau.
\label{om}
\eqqn
Then, the SQR problem (\ref{sqr}) with $R(\bmbeta(\cdot))$ given by (\ref{R2}) can be restated as
\eqn
\hat{\bmth} :=
\operatorname*{argmin}_{\bmth \in \bbR^{pK}} \bigg\{
 \sum_{\ell=1}^L \sum_{t=1}^n
\rho_{\tau_\ell}(y_t - \bx_t^T \bPhi(\tau_\ell) \bmth) +  (1/2)  \bmth^T  \bOm \, \bmth \bigg\}
\label{sqr3}
\eqqn
and
\eqn
\hat{\bmbeta}(\tau) := \bPhi(\tau)\,\hat{\bmth} \quad \forall \tau \in [a,b].
\label{betah}
\eqqn
In other words, the  cubic SQR solution can be obtained by searching for the vector $\hat{\bmth} \in \bbR^{pK}$ according to (\ref{sqr3}) and then transforming it  to $\hat{\bmbeta}(\cdot)$ using the basis functions according to (\ref{betah}).

Let $\bX  :=  [  \bx_1,\dots,\bx_n]^T$ be the $n$-by-$p$ design matrix. Let 
$\1$ denote a vector of 1's
and $\0$ a vector or matrix of 0's (when necessary a subscript will be introduced to show the dimension). 
With this notation and a standard trick of linearization in quantile regression, 
the SQR problem in (\ref{sqr3}) can be 
reformulated as a convex quadratic program (QP):
\eqn
\lefteqn{
\{\hat{\bmth},\hat{\bu}_1,\dots,
\hat{\bu}_L,  \hat{\bv}_1, \dots, \hat{\bv}_L \}  :=} \notag \\
&  &  
\operatorname*{argmin}_{
\bmth, \bu_1,\dots,\bu_L,\bv_1,\dots,\bv_L } \
\sum_{\ell=1}^L \{ \tau_\ell \1^T \bu_\ell + (1-\tau_\ell) \1^T \bv_\ell\}
+ (1/2) \bmth^T \bOm \, \bmth
   \notag \\
&& \quad
{\rm s.t.} \quad
\left\{
\begin{array}{l} 
\bX \, \bPhi(\tau_\ell) \, \bmth+ \bu_\ell - \bv_\ell = \by  \quad (\ell=1,\dots,L),  \\
\bu_\ell \ge \0, \ \bv_\ell \ge \0 \quad (\ell=1,\dots,L). \\
\end{array} \right.
\label{lp:sqr3}
\eqqn
In this formulation, the additional  decision variables $\bu_\ell \in \bbR_+^n$ and $\bv_\ell \in \bbR_+^n$ 
are introduced  just for the purpose of linearizing the loss function of quantile regression in  (\ref{sqr3}).

To put the QP problem (\ref{lp:sqr3}) into a more compact form, let 
\eq
\bu := [\bu_1^T,\dots,\bu_L^T]^T, \quad \bv := [\bv_1^T,\dots,\bv_L^T]^T,
\eqq
\eqn
\bc  := [\tau_1 \1_n^T, \dots, \tau_L \1_n^T]^T, \quad
\bD := [\bPhi^T(\tau_1) \bX^T,\dots,\bPhi^T(\tau_L) \bX^T]^T,
\label{D}
\eqqn
and
\eqn
 \bb := \1_L \otimes \by.
 \label{b}
\eqqn
Then, the equality constraints in (\ref{lp:sqr3}) become
$\bD \, \bmth + \bu - \bv  = \bb$ and the linear term of the loss function can be written as
$\sum_{\ell=1}^L \{ \tau_\ell \1^T \bu_\ell + (1-\tau_\ell) \1^T \bv_\ell \} 
= \bc^T \bu + (\1- \bc)^T \bv$. Therefore, the QP problem  (\ref{lp:sqr3}) takes the canonical form
\eqn
\min\{ \bc^T \bu + (\1- \bc)^T \bv +  (1/2) \bmth^T \bOm \, \bmth \mid  \bD\, \bmth  + \bu - \bv = \bb; \bmth \in \bbR^{pK}; \bu, \bv \in \bbR_+^{nL}\}.
\label{primal:sqr3}
\eqqn
This problem can be solved numerically by any general-purpose QP solvers that allow a semi-definite matrix in the quadratic term and accommodate linear equality and inequality constraints.

Associated with the primal QP in (\ref{primal:sqr3}) is the dual QP given by the following theorem.
The dual QP can serve as a more convenient form of input to  primal-dual algorithms that produce both primal and dual solutions.

\begin{thm}
{\rm
Let $\ba := \bD^T (\1 -\bc) \in \bbR^{pK}$, where $\bc$ and $\bD$ are given by (\ref{D}).
Let $\bb$ be defined by (\ref{b}). Then, the dual of the primal QP in (\ref{primal:sqr3})  can be formulated as
\eqn
\min\{ -\bb^T \bmzeta + (1/2)  \bmeta^T \bOm \, \bmeta 
\mid  \bD^T \bmzeta -  \bOm  \, \bmeta = \ba;  \bmeta \in \bbR^{pK};\bmzeta \in [0,1]^{nL} \},
\label{dual:sqr3}
\eqqn
where  $\bmlam  := \bmzeta - (\1-\bc)$ is the Lagrange multiplier for the equality constraints in (\ref{primal:sqr3}). 
}
\end{thm}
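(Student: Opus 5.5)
We will derive the dual from the Lagrangian of the primal QP (\ref{primal:sqr3}). Attach a multiplier $\bmlam \in \bbR^{nL}$ to the equality constraint $\bD\,\bmth + \bu - \bv = \bb$ and keep the sign constraints $\bu,\bv \ge \0$ explicit. The Lagrangian is
\eq
\cL(\bmth,\bu,\bv;\bmlam) = \bc^T\bu + (\1-\bc)^T\bv + (1/2)\bmth^T\bOm\,\bmth + \bmlam^T(\bb - \bD\,\bmth - \bu + \bv).
\eqq
The dual function is $g(\bmlam) := \inf_{\bmth\in\bbR^{pK},\,\bu,\bv\ge\0}\cL$. This infimum separates into three pieces, which we handle in turn.

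\emph{The $\bu$ and $\bv$ terms.} The infimum over $\bu \ge \0$ of $(\bc-\bmlam)^T\bu$ is $0$ if $\bmlam \le \bc$ and $-\infty$ otherwise. Likewise, the infimum over $\bv\ge\0$ of $(\1-\bc+\bmlam)^T\bv$ is $0$ if $\bmlam \ge -(\1-\bc)$ and $-\infty$ otherwise. So $g$ is finite only when $-(\1-\bc)\le\bmlam\le\bc$ componentwise. With the substitution $\bmzeta := \bmlam + (\1-\bc)$, this box becomes $\bmzeta\in[0,1]^{nL}$, which is exactly the box in (\ref{dual:sqr3}).

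\emph{The $\bmth$ term.} We minimize $(1/2)\bmth^T\bOm\,\bmth - \bmlam^T\bD\,\bmth$ over $\bmth$, and this is the step that needs care. Since $\bOm$ is only positive semidefinite (cubic spline second derivatives annihilate linear functions), the infimum is finite only if $\bD^T\bmlam$ lies in the range of $\bOm$. The clean way to handle this is to introduce an auxiliary variable $\bmeta$ with $\bOm\,\bmeta = -\bD^T\bmlam$; the sign is chosen to match (\ref{dual:sqr3}). Any minimizer satisfies $\bOm\,\bmth = \bD^T\bmlam$, so $\bmth = -\bmeta$ (modulo the null space of $\bOm$) is a minimizer, and the minimum value is $-(1/2)\bmeta^T\bOm\,\bmeta$, independent of which representative $\bmeta$ is used. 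If no such $\bmeta$ exists, $\bD^T\bmlam$ has a component in $\ker\bOm$ not orthogonal to it, and the infimum is $-\infty$.

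\emph{Assembling the dual.} On the feasible set,
\eq
g = \bb^T\bmlam - (1/2)\bmeta^T\bOm\,\bmeta = \bb^T\bmzeta - \bb^T(\1-\bc) - (1/2)\bmeta^T\bOm\,\bmeta.
\eqq
The constraint $\bOm\,\bmeta = -\bD^T\bmlam$ becomes $\bD^T\bmzeta - \bOm\,\bmeta = \bD^T(\1-\bc) = \ba$. Maximizing $g$ is therefore equivalent, after dropping the constant $\bb^T(\1-\bc)$ and changing sign, to minimizing $-\bb^T\bmzeta + (1/2)\bmeta^T\bOm\,\bmeta$ subject to $\bD^T\bmzeta - \bOm\,\bmeta = \ba$, $\bmzeta\in[0,1]^{nL}$, which is (\ref{dual:sqr3}). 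Finally, the primal is convex with linear constraints and is feasible; for example, take $\bmth=\0$, $\bu=\by^+$, $\bv=\by^-$ stacked. Its objective is also bounded below by $0$. Hence strong duality holds without further qualification, which justifies calling (\ref{dual:sqr3}) the dual with $\bmlam = \bmzeta - (\1-\bc)$ as the multiplier.
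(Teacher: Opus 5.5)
Your Lagrangian derivation reaches (\ref{dual:sqr3}), but the step that introduces $\bmeta$ has a sign slip. With your definition $\bOm\,\bmeta = -\bD^T\bmlam$ and $\bmlam = \bmzeta - (\1-\bc)$, substitution gives $\bOm\,\bmeta = -\bD^T\bmzeta + \bD^T(\1-\bc)$. That is $\bD^T\bmzeta + \bOm\,\bmeta = \ba$, not $\bD^T\bmzeta - \bOm\,\bmeta = \ba$, so the sign you say was ``chosen to match'' gives the opposite one. The right choice is $\bOm\,\bmeta = \bD^T\bmlam$. This is the stationarity condition $\bOm\,\bmth = \bD^T\bmlam$, with $\bmeta$ a minimizing $\bmth$ itself rather than its negative, and substituting then gives exactly the constraint in (\ref{dual:sqr3}). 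The dual objective involves $\bmeta$ only through $\bmeta^T\bOm\,\bmeta$, so the two versions are equivalent under $\bmeta \mapsto -\bmeta$. The conclusion therefore survives once this is corrected.

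The paper takes a different route and never computes a dual function. It splits the free variable as $\bmth = \bmgam - \bmdel$ with $\bmgam,\bmdel \ge \0$. This puts the primal in Dorn's standard form $\min\{\bd^T\bmxi + (1/2)\bmxi^T\bQ\,\bmxi \mid \bA\,\bmxi = \bb,\ \bmxi \ge \0\}$. It then quotes the general QP dual $\min\{-\bb^T\bmlam + (1/2)\bz^T\bQ\,\bz \mid \bA^T\bmlam - \bQ\,\bz \le \bd\}$ (Dorn 1960). The two opposite inequalities from the $\bmgam$ and $\bmdel$ blocks collapse to the equality $\bD^T\bmlam = \bOm\,\bmeta$ with $\bmeta := \bz_1 - \bz_2$, and the $\bu$ and $\bv$ blocks give the box $\bmlam \in [\bc - \1, \bc]$. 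The shift $\bmzeta = \bmlam + (\1-\bc)$ then finishes as in your argument.

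What each route buys:
\begin{itemize}
\item The paper's route is shorter given Dorn's theorem and mirrors its LP derivation for the linear SQR. It never has to confront the semidefiniteness of $\bOm$, because Dorn's dual already carries a free variable $\bz$.
\item Your route is self-contained. It shows why a singular $\bOm$ forces an auxiliary variable and an equality constraint rather than an inverse, and, after the sign fix, it identifies $\bmeta$ with the primal $\bmth$ at optimum.
\item Your route also justifies strong duality via feasibility and boundedness of a convex QP with affine constraints, which the paper does not address.
\end{itemize}
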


\noindent
{\sc Proof}. To put the primal QP into the standard  form,  
let $\bmth$ be reparameterized as $\bmgam - \bmdel$ with $\bmgam, \bmdel \in \bbR_+^{pK}$.
Define $\bmxi :=  [\bmgam^T,\bmdel^T,\bu^T,\bv^T]^T$, $\bd :=  [\0_{pK}^T,\0_{pK}^T,\bc^T,(\1 - \bc)^T]^T$, 
$\bA   :=   [ \bD, -\bD, \bI_{nL},-\bI_{nL}]$, and $\bQ := \diag\{(\bI_2 - \tilde{\bI}_2) \otimes \bOm,\0_{2nL \times 2nL} \}$ with $\tilde{\bI}_2$ denoting the 2-by-2 anti-diagonal matrix.
Then, the primal QP in (\ref{primal:sqr3}) can be rewritten as
\eqn
\min\{ \bd^T \bmxi + (1/2)  \bmxi^T \bQ \, \bmxi \mid \bA \, \bmxi = \bb; \bmxi \in \bbR_+^{2pK+2nL} \}.
\label{primal3}
\eqqn
The corresponding dual QP (Dorn 1960; Friedlander and Orban 2012) takes the form
\eqn
\min\{ -\bb^T \bmlam + (1/2)  \bz^T \bQ  \,\bz  \mid    \bA^T \bmlam  - \bQ \, \bz \le \bd;  
\bmlam \in \bbR^{nL}; \bz \in \bbR^{2pK+2nL} \},
\label{dual0}
\eqqn
where  $\bmlam$ is the Lagrange multiplier 
for the equality constraints in (\ref{primal3}). Let $\bz = [\bz_1^T,\bz_2^T,\bz_3^T,\bz_4^T]^T$ 
be partitioned in the same way as $\bmxi = [\bmgam^T,\bmdel^T,\bu^T,\bv^T]^T$.
 Then, with the new variable $\bmeta := \bz_1 - \bz_2 \in \bbR^{pK}$, we can write
\eq
\bQ  \, \bz =
\left[
\begin{array}{r}
\bOm  \, \bmeta  \\
-\bOm \, \bmeta \\
 \0_{nL} \\
\0_{nL}  
\end{array}  
\right].
\eqq
Moreover, by partitioning $\bmlam$ according to the structure of $\bb$
such that $\bmlam := [\bmlam_1^T,\dots,\bmlam_L^T]^T$.
the inequalities $\bA^T \bmlam - \bQ \, \bz \le \bd$ in (\ref{dual0}) can be written more elaborately as
\begin{gather*}
\sum_{\ell=1}^L \bPhi^T(\tau_\ell) \, \bX^T \bmlam_\ell  - \bOm \,  \bmeta \le   \0, \quad
- \sum_{\ell=1}^L  \bPhi^T(\tau_\ell) \, \bX^T \bmlam_\ell + \bOm \,  \bmeta  \le    \0, \\
\bmlam_\ell \le  \tau_\ell \1,   \quad  - \bmlam_\ell \le  (1-\tau_\ell) \1 \quad (\ell=1,\dots,L).
\end{gather*}
These inequalities are equivalent to
\eq
\sum_{\ell=1}^L \bPhi^T(\tau_\ell) \, \bX^T \bmlam_\ell - \bOm \, \bmeta = \0, \quad 
\bmlam_\ell \in [\tau_\ell-1, \tau_\ell]^n \quad (\ell =1,\dots,L).
\eqq
By a change of variables, $\bmzeta   := [\bmzeta_1^T,\dots,\bmzeta_L^T]^T := 
\bmlam +  (\1-\bc)$, i.e., $\bmzeta_\ell := \bmlam_\ell + (1-\tau_\ell) \1$ $(\ell=1,\dots,L)$,
we can write
\begin{gather*}
\bb^T \bmlam  = \sum_{\ell=1}^L \by^T \bmlam_\ell =  \sum_{\ell=1}^L \by^T \bmzeta_\ell  
-\sum_{\ell=1}^L (1-\tau_\ell) \by^T \1 = \bb^T \bmzeta + \text{constant}, \\
\sum_{\ell=1}^L \bPhi^T(\tau_\ell) \, \bX^T \bmlam_\ell  =  \sum_{\ell=1}^L  \bPhi^T(\tau_\ell) \, \bX^T  \bmzeta_\ell  
 - \sum_{\ell=1}^L (1-\tau_\ell) \, \bPhi^T(\tau_\ell) \, \bX^T \1
 =  \bD^T \bmzeta - \ba, \\
\bmlam_\ell \in [\tau_\ell-1,\tau_\ell]^n  \leftrightarrow  \bmzeta_\ell \in [0,1]^n,  \quad (\ell=1,\dots,L).
\end{gather*}
Combining these expressions 
with the fact that $\bz^T \bQ \, \bz = \bmeta^T \bOm \, \bmeta$ proves the assertion. \qed

In the R implementation (see Appendix), we employ  the  {\tt solve\underline{ }piqp} function 
in the R package `piqp' (Narasimhan et al.\ 2023; Schwan et al.\ 2023) to solve the cubic SQR problem. 
This function is a general-purpose solver that uses a proximal  interior-point method to
produce both primal and dual solutions for convex QP problems under linear equality and inequality constraints with  bounded or unbounded decision variables. Experience shows that applying this function to 
the dual  QP in (\ref{dual:sqr3})  solves the cubic SQR problem faster than 
applying it to the primal QP in (\ref{primal:sqr3}).   Another general-purpose solver employed in our R implementation is the  {\tt solve\underline{ }osqp}  function from the `osqp' package (Stellato et al.\ 2020; Stellato et al.\ 2024). As a first-order method, this function is able to accommodate  large matrices that cannot be handled by  {\tt solve\underline{ }piqp}. However, its convergence tends to be more difficult to achieve  especially when the smoothing parameter in the cubic SQR problem is relatively large.

For convenience, the R implementation substitutes the penalty in (\ref{R2:sqr3}) by a discrete approximation
$\bmth^T \{ \sum_{\ell=1}^L w_\ell \, \ddot{\bPhi}^T(\tau_\ell) \ddot{\bPhi}(\tau_\ell) \} \bmth$.
With this choice, the primal and dual formulations discussed in Section~\ref{sec:sqr3} remain
valid provided that $\bOm$ in (\ref{om}) is redefined as
\eq
\bOm := 2 \sum_{\ell=1}^L c_\ell \, \ddot{\bPhi}^T(\tau_\ell) \, \ddot{\bPhi}(\tau_\ell) 
\eqq
with  $c_\ell := n c w_\ell$. The  weight sequence $\{ w_\ell \}$ is introduced to enable user-specified adjustment to  the local smoothness requirement  at different quantiles.

\subsection{Linear SQR by Linear Programming}

Let $\{ \phi_k(\tau): k=1,\dots,K\}$ denote a set of linear spline basis functions defined on the interval $[a,b]$
with knots $\{ \tau_\ell\}$ $(K:= L$). With the matrix notation introduced in Section~\ref{sec:sqr3}, and 
with 
\eq
\Delta \dot{\bPhi}(\tau_{\ell}) := \dot{\bPhi}(\tau_{\ell+1}) - \dot{\bPhi}(\tau_{\ell})
\quad (\ell=1,\dots,L-1),
\eqq
the linear SQR problem  can be restated as
\eqn
\hat{\bmth} :=
\operatorname*{argmin}_{\bmth \in \bbR^{pK}} \bigg\{
 \sum_{\ell=1}^L \sum_{t=1}^n
\rho_{\tau_\ell}(y_t - \bx_t^T \bPhi(\tau_\ell) \bmth) +  \sum_{\ell=1}^{L-1}  c_\ell \,
 \| \Delta \dot{\bPhi}(\tau_\ell) \bmth \|_1 \bigg\},
\label{sqr1}
\eqqn
where $c_\ell := n c w_\ell$. The  weight sequence $\{ w_\ell \}$ is introduced to enable user-specified adjustment to the local smoothness requirement at different quantiles.

Like the ordinary QR problem (Koenker 2005, p.\ 7) and the SQR problem in Li and Megiddo (2026),
the linear SQR problem in (\ref{sqr1}) can be reformulated as a linear program (LP):
\eqn
\lefteqn{
(\hat{\bmgam}, \hat{\bmdel}, \hat{\bu}_1,\dots,\hat{\bu}_L,
\hat{\bv}_1, \dots,\hat{\bv}_L, \hat{\br}_1, \dots,
\hat{\br}_{L-1},\hat{\bs}_1 \dots, \hat{\bs}_{L-1} )  :=} \notag \\
&  &  
\operatorname*{argmin} \
 \bigg\{ 
\sum_{\ell=1}^L ( \tau_\ell \1^T \bu_\ell + (1-\tau_\ell) \1^T \bv_\ell ) + 
\sum_{\ell=1}^{L-1} (
\1^T \br_\ell + \1^T \bs_\ell) \bigg\}
   \notag \\
&& \quad
{\rm s.t.} \quad
\left\{
\begin{array}{l} 
\bX \, \bPhi(\tau_\ell) (\bmgam - \bmdel) + \bu_\ell - \bv_\ell = \by \quad (\ell=1,\dots,L),  \\
c_\ell \, \Delta \dot{\bPhi}(\tau_\ell)  (\bmgam - \bmdel) - (\br_\ell - \bs_\ell) = \0\quad (\ell=1,\dots,L-1),
\end{array} \right.
\label{lp}
\eqqn
where $\bmgam, \bmdel \in \bbR_+^{pK}$, 
$\bu_\ell, \bv_\ell \in \bbR_+^n$, $\br_\ell, \bs_\ell \in \bbR_+^p$ 
are the decision variables, of which $\bu := [\bu_1^T,\dots,\bu_L^T]^T$, $\bv := 
[\bv_1^T,\dots,\bv_L^T]^T$, $\br := [\br_1^T,\dots,\br_{L-1}^T]^T$, and $\bs := [\bs_1^T,\dots,\bs_{L-1}^T]^T$ are auxiliary variables introduced  just for the purpose of linearizing the loss function
of quantile regression. The desired solution in  (\ref{sqr1}) can be written as $\hat{\bmth} = \hat{\bmgam} - \hat{\bmdel}$.

Let $\bmxi  := [\bmgam^T, \bmdel^T, \bu^T, \bv^T,\br^T,\bs^T]^T$. Then, 
the LP problem (\ref{lp}) takes the canonical form
\eqn
\min\{ \bd^T \bmxi \mid \bA \bmxi = \bb; \bmxi \in \bbR_+^{2pK + 2nL+2p(L-1)} \},
\label{primal}
\eqqn
where 
\eqn
\bd  & := & [\0_p^T,\0_p^T, \bc^T, (\1 - \bc)^T, \1_{p(L-1)}^T, \1_{p(L-1)}^T]^T, \notag \\
\bA  & := & \left[
\begin{array}{cccccc}
\bD & -\bD  & \bI_{nL} & - \bI_{nL} & \0 & \0 \\
\bP & -\bP &  \0 & \0 & - \bI_{p(L-1)} & \bI_{p(L-1)}
\end{array}
\right], \notag \\
\bb  & := & [ \1_p^T \otimes \by^T, \0_{p(L-1)}^T]^T, \label{b2}
\eqqn
and
\eqn
\bP  :=  [c_1 \Delta \dot{\bPhi}^T(\tau_1), \dots,c_{L-1} \Delta \dot{\bPhi}^T(\tau_{L-1})]^T. 
\label{P}
\eqqn
The primal problem (\ref{primal}) has a dual that can be written as
\eqn
\max\{ \bb^T \bmlam \mid  \bA^T \bmlam \le \bd; \bmlam \in \bbR^{nL+p(L-1)} \},
\label{dual0}
\eqqn
where $\bmlam$ is the Lagrange multiplier for the equality constraints in (\ref{primal}). 
By a change of variables, the dual in (\ref{dual0}) is transformed into a problem
with bounded variables and equality constraints. This result is stated in the following.

\begin{thm}
{\rm
Let $\bD$,  $\bb$, and $\bP$ be given by  (\ref{D}), (\ref{b2}), and (\ref{P}), respectively.
Define
\eq
\bC  :=  [\bD^T,2 \bP^T]^T, \quad \ba  :=  \sum_{\ell=1}^L (1-\tau_\ell)  \, \bPhi^T(\tau_\ell) \, \bX^T \1_n + \sum_{\ell=1}^{L-1} 
c_\ell \, \Delta \dot{\bPhi}^T(\tau_\ell) \1_p.
\eqq
Then, the dual LP problem (\ref{dual0})  can be rewritten as
\eqn
\max\{ \bb^T \bmzeta \mid \bC^T \bmzeta = \ba;  \bmzeta \in [0,1]^{nL+p(L-1)} \},
\label{dual}
\eqqn
where $\bmlam := \diag\{\bI_{nL} ,2\bI_{p(L-1)}\} \bmzeta
- [  (\1-\bc)^T,\1_{p(L-1)}^T]^T$ is the  Lagrange multiplier for the equality constraints in (\ref{primal}). 
}
\label{thm:sqr1}
\end{thm}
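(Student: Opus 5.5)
The plan is to write out the dual constraints $\bA^T \bmlam \le \bd$ in (\ref{dual0}) block by block, using the block structure of $\bA$, and then apply an affine change of variables. This is the same route as the proof of Theorem~1, except that the dual is now a pure LP and no quadratic term has to be tracked. Partition $\bmlam = [\bmlam_1^T, \bmlam_2^T]^T$ with $\bmlam_1 \in \bbR^{nL}$ attached to the quantile-regression equalities and $\bmlam_2 \in \bbR^{p(L-1)}$ attached to the penalty equalities. Then $\bA^T \bmlam \le \bd$ splits into six groups of inequalities, one for each column block $\bmgam, \bmdel, \bu, \bv, \br, \bs$.

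The first two groups read $\bD^T \bmlam_1 + \bP^T \bmlam_2 \le \0$ and $-(\bD^T \bmlam_1 + \bP^T \bmlam_2) \le \0$. Together they are equivalent to the equality $\bD^T \bmlam_1 + \bP^T \bmlam_2 = \0$, because $\bmgam$ and $\bmdel$ enter with zero cost and opposite signs. The $\bu$ and $\bv$ groups give $\bmlam_1 \le \bc$ and $-\bmlam_1 \le \1 - \bc$, so each $\bmlam_{1,\ell}$ lies in $[\tau_\ell - 1, \tau_\ell]^n$. The $\br$ and $\bs$ groups give $-\bmlam_2 \le \1$ and $\bmlam_2 \le \1$, so $\bmlam_2 \in [-1,1]^{p(L-1)}$.

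Next I would substitute $\bmlam_1 = \bmzeta_1 - (\1 - \bc)$ and $\bmlam_2 = 2\bmzeta_2 - \1$. This maps both boxes onto $[0,1]$, and it is exactly the stated relation $\bmlam = \diag\{\bI_{nL}, 2\bI_{p(L-1)}\}\bmzeta - [(\1-\bc)^T, \1^T]^T$. Substituting into the equality constraint gives
\[
\bD^T \bmzeta_1 + 2\bP^T \bmzeta_2 = \bD^T(\1-\bc) + \bP^T \1 ,
\]
whose left-hand side is $\bC^T \bmzeta$ with $\bC = [\bD^T, 2\bP^T]^T$. Expanding the right-hand side with (\ref{D}) and (\ref{P}) reproduces $\ba$: the first term is $\sum_\ell (1-\tau_\ell)\bPhi^T(\tau_\ell)\bX^T\1_n$ and the second is $\sum_\ell c_\ell \Delta\dot{\bPhi}^T(\tau_\ell)\1_p$.

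For the objective, the second block of $\bb$ is zero, so $\bb^T \bmlam = \bb_1^T \bmlam_1 = \bb^T \bmzeta - \bb_1^T(\1-\bc)$. This differs from $\bb^T\bmzeta$ only by a constant, so the maximizer is unchanged and (\ref{dual}) follows. The only real obstacle is bookkeeping, namely matching dimensions and signs. In particular I would check that the zero blocks of $\bd$ and $\bb$ have the right lengths: the $\bmgam, \bmdel$ blocks of $\bd$ should be $\0_{pK}$, and the first block of $\bb$ should be $\1_L \otimes \by$, as in (\ref{b}). I would read (\ref{b2}) in that sense.
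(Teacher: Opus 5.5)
Your proposal is correct and follows the paper's proof essentially step for step. You write $\bA^T\bmlam\le\bd$ block by block, merge the $\bmgam$ and $\bmdel$ inequalities into an equality, read off the boxes $[\tau_\ell-1,\tau_\ell]^n$ and $[-1,1]^p$, and apply the affine substitution $\bmzeta_\ell=\bmlam_\ell+(1-\tau_\ell)\1$, $\bmzeta_{L+\ell}=(\bmlam_{L+\ell}+\1)/2$, under which the objective changes only by a constant.

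Your two-block partition of $\bmlam$ is only a coarser bookkeeping than the paper's $\bmlam_1,\dots,\bmlam_{2L-1}$. Your sign in the $\bmdel$ block and your reading of $\bb$ as $[(\1_L\otimes\by)^T,\0^T]^T$ are the correct ones; the paper's displayed versions contain typos at those points.
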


\noindent
{\sc Proof}. By partitioning $\bmlam$ in (\ref{primal}) according to the structure of $\bb$
such that 
\eq
\bmlam := [\bmlam_1^T,\dots,\bmlam_L^T,\bmlam_{L+1}^T,\dots,\bmlam_{2L-1}^T]^T,
\eqq  
the inequalities $\bA^T \bmlam \le \bd$ in (\ref{dual0}) can be written more elaborately as
\eq
\sum_{\ell=1}^L \bPhi^T(\tau_\ell) \, \bX^T \bmlam_\ell +  \sum_{\ell=1}^{L-1} c_\ell \, \Delta \dot{\bPhi}^T(\tau_\ell) \bmlam_{L+\ell}   & \le &  \0, \\
- \sum_{\ell=1}^L \bPhi^T(\tau_\ell) \, \bX^T \bmlam_\ell +  \sum_{\ell=1}^{L-1}c_\ell \, \Delta \dot{\bPhi}^T(\tau_\ell) \bmlam_{L+\ell}  & \le &   \0, 
\eqq
\eq
\bmlam_\ell \le \tau_\ell \1,   \quad  - \bmlam_\ell \le (1-\tau_\ell) \1 \quad (\ell=1,\dots,L), \\
 - \bmlam_{L+\ell} \le  \1, \quad \bmlam_{L+\ell} \le  \1 \quad (\ell=1,\dots,L-1).
\eqq
These inequalities are equivalent to
\eq
\sum_{\ell=1}^L \bPhi^T(\tau_\ell) \, \bX^T \bmlam_\ell + \sum_{\ell=1}^{L-1} c_\ell \, \Delta \dot{\bPhi}^T(\tau_\ell) \bmlam_{L+\ell}  = \0, \\
\bmlam_\ell \in [\tau_\ell-1,\tau_\ell]^n \quad (\ell =1,\dots,L), \\
\bmlam_{L+\ell} \in [-1, 1]^p \quad (\ell =1,\dots,L-1).
\eqq
By a change of variables,
\begin{gather*}
\bmzeta_\ell  := \bmlam_\ell + (1-\tau_\ell) \1 \quad (\ell=1,\dots,L), \\
 \bmzeta_{L+\ell} :=  \frac{1}{2} (  \bmlam_{L+\ell} + \1) \quad (\ell=1,\dots,L-1),
\end{gather*}
we obtain
\eq
\bb^T \bmlam  = \sum_{\ell=1}^L \by^T \bmlam_\ell = \sum_{\ell=1}^L \by^T \bmzeta_\ell  - \sum_{\ell=1}^L (1-\tau_\ell)  
\by^T \1 = \bb^T \bmzeta + \text{constant},
\eqq
\eq
\lefteqn{\sum_{\ell=1}^L \bPhi^T(\tau_\ell) \, \bX^T \bmlam_\ell + \sum_{\ell=1}^{L-1} c_\ell  \,\Delta \dot{\bPhi}^T(\tau_\ell) \bmlam_{L+\ell} } \\
& = & \sum_{\ell=1}^L  \bPhi^T(\tau_\ell) \, \bX^T  \bmzeta_\ell +  \sum_{\ell=1}^{L-1} 2 c_\ell \, \Delta \dot{\bPhi}^T(\tau_\ell) \bmzeta_{L+\ell}  \\
& & - \sum_{\ell=1}^L (1-\tau_\ell) \, \bPhi^T(\tau_\ell) \, \bX^T \1
 - \sum_{\ell=1}^{L-1}c_\ell  \, \ddot{\bPhi}^T(\tau_\ell) \1, 
\eqq
\eq
\bmlam_\ell \in [\tau_\ell-1,\tau_\ell]^n & \leftrightarrow & \bmzeta_\ell \in [0,1]^n \quad (\ell=1,\dots,L), \\
\bmlam_{L+\ell} \in [-1,1]^p & \leftrightarrow & \bmzeta_{L+\ell} \in [0,1]^p  \quad (\ell=1,\dots,L-1).
\eqq
Combining these expressions proves the assertion. \qed

Using the quantities in the dual problem (\ref{dual}), the primal problem (\ref{primal}) 
is restated in the following theorem in terms of the primary variable $\bmth$.

\begin{thm}
{\rm 
With $\bC$ and $\ba$ given by Theorem~\ref{thm:sqr1} and with $\bb$ defined by (\ref{b2}),
the primal LP problem (\ref{primal}) can be rewritten as
\eqn
\min\{ \ba^T \bmth \mid \bC \bmth + \bz - \bw = \bb; \bmth \in \bbR^{pK}; \bz,\bw \in \bbR_+^{nL+p(L-1)}\},
\label{primal2}
\eqqn
where $\bmth := \bmgam-\bmdel$, $\bz := [\bu^T,2\bs^T]^T$,
and $\bw := [\bv^T,2\br^T]^T$.
}
\end{thm}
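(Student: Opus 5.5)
The plan is a direct substitution argument: eliminate the ``negative-part'' slacks $\bv$ and $\br$ from the objective of (\ref{primal}) using the two blocks of equality constraints. Then rename the remaining variables as in the statement and check that the feasible sets correspond. Finally, I would confirm the result against Theorem~\ref{thm:sqr1} by LP duality, so that (\ref{primal2}) and (\ref{dual}) are a genuine primal--dual pair.

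\emph{Step 1: feasible sets.} With $\bmth := \bmgam - \bmdel$, the constraints of (\ref{lp}) read $\bD\bmth + \bu - \bv = \1_L\otimes\by$ and $\bP\bmth - \br + \bs = \0$. Multiplying the second block by $2$ and setting $\bz := [\bu^T, 2\bs^T]^T$, $\bw := [\bv^T, 2\br^T]^T$ and $\bC := [\bD^T, 2\bP^T]^T$ gives exactly $\bC\bmth + \bz - \bw = \bb$ with $\bz,\bw \ge \0$. Conversely, any feasible $(\bmth,\bz,\bw)$ yields a feasible $\bmxi$ by taking $\bmgam,\bmdel$ to be the positive and negative parts of $\bmth$ and halving the second blocks of $\bz$ and $\bw$. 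So the two feasible sets are in bijection, modulo the harmless non-uniqueness of the split $\bmth = \bmgam - \bmdel$.

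\emph{Step 2: the objective.} This is the step I expect to be the main obstacle. Substituting $\bv = \bu - \bb_1 + \bD\bmth$, where $\bb_1 := \1_L\otimes\by$, and $\br = \bs + \bP\bmth$ gives
\[
\bc^T\bu + (\1-\bc)^T\bv + \1^T\br + \1^T\bs
= \{\bD^T(\1-\bc) + \bP^T\1\}^T\bmth + \1^T\bu + 2\,\1^T\bs - (\1-\bc)^T\bb_1 .
\]
The coefficient of $\bmth$ equals $\sum_\ell(1-\tau_\ell)\bPhi^T(\tau_\ell)\bX^T\1_n + \sum_\ell c_\ell\,\Delta\dot{\bPhi}^T(\tau_\ell)\1_p = \ba$, which is exactly the $\ba$ of Theorem~\ref{thm:sqr1}. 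The last term is a constant.

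However, the computation also leaves the term $\1^T\bu + 2\,\1^T\bs = \1^T\bz$. This term cannot be dropped. Taken literally, with objective $\ba^T\bmth$ alone, the constraint $\bC\bmth + \bz - \bw = \bb$ is satisfiable for every $\bmth$, so the problem would be unbounded unless $\ba = \0$. The statement's ``$\ba^T\bmth$'' therefore has to be read as carrying the unit cost on $\bz$. The precise claim I would prove is that (\ref{primal}) equals $\min\{\ba^T\bmth + \1^T\bz \mid \bC\bmth + \bz - \bw = \bb;\ \bz,\bw\ge\0\}$ up to the additive constant $-(\1-\bc)^T\bb_1$, with the same minimizer $\hat{\bmth}$.

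\emph{Step 3: consistency check.} I would write the LP dual of the problem from Step~2. Its multiplier $\bmzeta$ satisfies $\bC^T\bmzeta = \ba$ because $\bmth$ is free. The variable $\bz$ with cost $\1$ gives $\bmzeta \le \1$, and the variable $\bw$ with cost $\0$ gives $\bmzeta \ge \0$. The objective is $\max \bb^T\bmzeta$, which is precisely (\ref{dual}). This confirms both the reformulation and the required reading of the objective, since only the version with the $\1^T\bz$ term has (\ref{dual}) as its dual.
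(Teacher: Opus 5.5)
Your proposal is correct, and it proves the corrected statement rather than the printed one. Steps 1--2 are the paper's own argument. The paper also solves the two constraint blocks for $\bv_\ell$ and $\br_\ell$, substitutes them, and obtains $\bd^T\bmxi = \ba^T\bmth + \|\bz\|_1 + \text{constant}$.

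The two arguments differ only at the last step, and there you are right and the paper is not. The paper discards $\|\bz\|_1$ on the grounds that $\bz$ is not free, since the constraint fixes $\bz = \bb - (\bC\bmth - \bw)$. Eliminating $\bz$ this way does not remove its cost. It turns the objective into $(\ba - \bC^T\1)^T\bmth + \1^T\bw + \1^T\bb$ over $\{\bC\bmth - \bw \le \bb,\ \bw \ge \0\}$, which is a valid reformulation but not $\min \ba^T\bmth$.

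With the printed objective, every $\bmth$ can be completed to a feasible point. The printed problem is therefore either unbounded (when $\ba \ne \0$) or solved by every $\bmth$ (when $\ba = \0$). The case $\ba \ne \0$ is generic: with an intercept column and a partition-of-unity basis, the intercept block of $\ba$ sums to $n\sum_\ell (1-\tau_\ell) > 0$. In neither case does the printed problem recover $\hat{\bmth}$, whereas (\ref{primal}) has a nonnegative objective. So the statement must carry the term $\1^T\bz$, exactly as you propose.

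Your Step 3 shows that the LP dual of $\min\{\ba^T\bmth + \1^T\bz \mid \bC\bmth + \bz - \bw = \bb;\ \bz,\bw \ge \0\}$ is precisely (\ref{dual}). The paper does not make this check, and it confirms that the corrected objective, not the printed one, is consistent with Theorem~\ref{thm:sqr1}.
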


\noindent
{\sc Proof}. Observe that the equality constraints in (\ref{primal}) can be written as 
\eqn
\bG \bmth + \bz - \bw = \bb.
\label{ec}
\eqqn
Under these constraints, $\bv_\ell  = \bX \, \bPhi(\tau_\ell) \bmth + \bu_\ell - \by$ and $ \br_\ell = 
c_\ell \, \Delta \dot{\bPhi}(\tau_\ell) \bmth + \bs_\ell$. Therefore,
\eq
\bd^T \bmxi  & = & \sum_{\ell=1}^L \{ \tau_\ell \1^T \bu_\ell + (1-\tau_\ell) \1^T \bv_\ell \} + \sum_{\ell=1}^{L-1} \{ \1^T \br_\ell + \1^T \bs_\ell \} \\
&=&  \sum_{\ell=1}^L \{ \tau_\ell \1^T \bu_\ell +  (1-\tau_\ell) \1^T  ( \bX \, \bPhi(\tau_\ell) \bmth +  \bu_\ell - \by) \} 
+ \sum_{\ell=1}^{L-1} \{ \1^T (c_\ell \, \Delta \dot{\bPhi}(\tau_\ell) \bmth + \bs_\ell) + \1^T \bs_\ell \} \\
& = &  \sum_{\ell=1}^L (1-\tau_\ell) \1^T  \bX \, \bPhi(\tau_\ell) \bmth + \sum_{\ell=1}^{L-1} 
c_\ell \1^T \Delta \dot{\bPhi}(\tau_\ell) \bmth  
+ \sum_{\ell=1}^L \{ \1^T \bu_\ell - (1-\tau_\ell)  \1^T  \by  \} 
+\sum_{\ell=1}^{L-1}  2 \1^T \bs_\ell   \\
& = & \ba^T \bmth +  \| \bz \|_1 + \text{constant}.
\eqq
The assertion follows upon noting that $\bz$ is not a free parameter.under the condition (\ref{ec}) 
because it is equivalent to $\bC \bmth - \bw \le \bb$ and $\bz := \bb - (\bC \bmth - \bw)$. \qed.

Given the canonical form (\ref{primal}),  the linear SQR solution can be computed by any general-purpose LP solvers. For example, the {\tt lp} function in the R package `lpSolve'
(Berkelaar 2024)  provides an interface to the open-source software {\tt lp\_solve} for solving LP problems by a simplex method. However, in using such solvers, the large number of decision variables and constraints in (\ref{primal})  can be a challenge to both computer memory and computer time.

A more efficient alternative is the primal-dual interior-point method developed by Portnoy and Koenker (1997) specifically for quantile regression. This method solves the primal-dual pair jointly by  Newton's method in which the positivity constraints are enforced by log barriers (Koenker and Ng 2005).  An implementation of the method in FORTRAN, called {\tt rqfnb.f}, is available in the `quanreg' package (Koenker 2005).  The {\tt rq.fit.fnb} function in this package provides an interface to the FORTRAN code for solving the ordinary QR problem. The input variables accepted by this function are $\by$, $\bX$, and $(1-\tau) \bX^T \1_n$ according to the dual 
formulation of the QR problem
\eqn
\max \{ \by^T \bmzeta \mid \bX^T \bmzeta = (1-\tau) \bX^T \1; \bmzeta \in [0,1]^n\}.
\label{qr:dual}
\eqqn
A comparison of (\ref{qr:dual}) with (\ref{dual}) suggests that the linear SQR problem can be solved by a modified 
version of this function to accept $\bb$, $\bC$, and $\ba$, respectively, as input variables. In addition, 
the initial value of  $\bmzeta$ need be changed from $1-\tau$ to  $[(1-\tau_1) \1_n^T,\dots,(1-\tau_L) \1_n^T,0.5 \1_{p(L-1)}^T]^T$. The modified version, called {\tt rq.fit.fnb2}, is used in our R implementation,

The R function {\tt rq.fit.sfn} in the `quantreg' package is an alternative implementation 
of the interior-point method of Portnoy and Koenker (1997). It employs sparse-matrix representations and therefore consumes less memory space than  {\tt rq.fit.fnb2} for large problems. Although not as fast as {\tt rq.fit.fnb2}
for small problems based on our experience, its greater flexibility on input variables allows {\tt rq.fit.sfn} to be invoked without modification 
to compute the linear SQR solution based on the input variables $\bb$, $\bC$, and $\ba$.

\subsection{Smoothing Parameter Selection}

To device a data-driven criterion for selecting the smoothing parameter $c$, we extend the method  used by 
Koenker et al.\  (1994).  Under a location-scale model  $y_t= \mu(x_t) + \sig z_t$ with
$\{ z_t \} \sim \text{i.i.d. }   h(z) $ and $h(\cdot) \propto \exp(-\rho_\tau(\cdot))$, Koenker et al.\  (1994)   (c.f.\ Koenker 2005, p.\ 234)
select the smoothing parameter in a smoothing spline estimator $\hat{\mu}(\cdot)$ of $\mu(\cdot)$ 
by employing Schwarz's BIC (or SIC) criterion  of the form 
$2n \log (\hat{\sig}) + \log(n) m$, where $\hat{\sig} := n^{-1} \sum_{t=1}^n \rho_\tau(y_t -\hat{\mu}(x_t))$ 
and  $m :=  \sum_{t=1}^n I(|y_t -\hat{\mu}(x_t) | < \ep)$ (for small $\ep$).
The choice of complexity measure $m$ is motivated by the fact that in the case of linear quantile regression with $p$ explanatory variables, the QR solution interpolates the $y_t$'s at exactly $p$ points so that $m = p$.

 Inspired by this idea, we consider 
the following criterion
for selecting the parameter $c$ in both cubic and linear SQR solutions:
\eqn
{\rm BIC}(c) :=  2 n \log\bigg( L^{-1} \sum_{\ell=1}^L \hat{\sig}(\tau_\ell) \bigg)
+ \log(n) \, \bigg( L^{-1} \sum_{\ell=1}^L m(\tau_\ell) \bigg),
\label{BIC}
\eqqn
where $\hat{\sig}(\tau_\ell) := n^{-1} \sum_{t=1}^n \rho_{\tau_\ell} (y_t - \bx_t^T \hat{\bmbeta}(\tau_\ell))$ and $m(\tau_\ell) := \sum_{t=1}^n I(|y_t -\bx_t^T \hat{\bmbeta}(\tau_\ell) | < \ep)$. 
Intuitively, as $c$ increases, the smoothing effect makes  $\hat{\sig}(\tau_\ell)$ increase and 
$m(\tau_\ell)$ decrease.  The best choice of $c$  strikes a balance between these opposing movements.
By convention,  Akaike's AIC criterion can be obtained by replacing $\log(n)$  in (\ref{BIC}) with te constant 2. When $n$ is sufficiently large ($n \ge 8$ so that $\log(n) > 2$), the heavier penalty in BIC tends to yield a smoother estimate. 

In the R implementation, the smoothing parameter $c$ is reparameterized by {\tt spar} in a  way similar to the R function {\tt smooth.spline} (R Core Team 2024),  i.e., $c := r \times 1000^{{\tt spar}-1}$, where
for the cubic SQR problem, 
\eq
r := \frac{n^{-1}  \sum_{\ell=1}^L \| \bX \bPhi(\tau_\ell)\|_1}
{ \sum_{\ell=1}^L w_\ell \, \tr( \ddot{\bPhi}^T(\tau_\ell) \, \ddot{\bPhi}(\tau_\ell))}.
\eqq
and for the linear SQR problem, 
 \eq
r := \frac{n^{-1}  \sum_{\ell=1}^L \| \bX \bPhi(\tau_\ell)\|_1}
{ \sum_{\ell=1}^{L-1} w_\ell \|  \Delta \dot{\bPhi}(\tau_\ell) \|_1}.
 \eqq 
The AIC and BIC criteria are minimized as functions of {\tt spar} in a suitable interval.

\subsection{Bootstrap Confidence Band}

A pointwise bootstrap confidence band can be constructed from the SQR estimates using various sampling schemes (Koenker 2005). In particular, we consider the so-called $(x,y)$-pair method discussed in Koenker (2005, p.\ 107). This simple but effective method generates bootstrap samples $\{(\bx_t^{(b)},y_t^{(b)}): t=1,\dots,n\}$ 
$(b=1,\dots,B)$ by  drawing from the observed pairs $(\bx_t,y_t)$ $(t=1,\dots,n)$ randomly with replacement.
Using these data sets, we obtain the bootstrap SQR estimates $\hat{\bmbeta}^{(b)}(\cdot)$ $(b=1,\dots,B)$. A 90\% confidence band for $\bmbeta_0(\tau)$ at fixed $\tau$ is  given by the interval with the 5th and 95th percentiles of  $\{\hat{\bmbeta}^{(b)}(\tau): b=1,\dots,B\}$ as lower and upper limits.
 
 The $(x,y)$-pair method can be extended to handle serially-dependent time series data. Instead 
 of sampling the $(x,y)$-pairs individually, one draws random blocks of consecutive $(x,y)$-pairs of fixed length 
 from the series $\{ (\bx_t,y_t): t=1,\dots,n\}$ and concatenates these blocks to create a bootstrap sample 
$ \{(\bx_t^{(b)},y_t^{(b)}): t=1,\dots,n\}$. This block bootstrap method preserves certain serial dependence of the original data without relying on a parametric model, which can be difficult to have especially when the entire conditional distribution 
need to be represented faithfully. The block bootstrap method  reduces to the original $(x,y)$-pair method when the block length is set to 1. Comprehensive reviews of additional bootstrap methods for time series can be found in B\"uhlmann (2002) and Lahiri (2003).

\subsection{Derivatives}

Because both cubic and linear SQR estimates of $\bmbeta_0(\cdot)$ take the form 
$\hat{\bmbeta}(\cdot) = \bPhi(\cdot)\,\hat{\bmth}$, one can easily compute the derivatives 
of these estimates using $\hat{\dot{\bmbeta}}(\cdot) := \dot{ \bPhi}(\cdot)\,\hat{\bmth}$, where
$\dot{ \bPhi}(\cdot)$ is defined by the derivatives of the spline basis functions in the same way as
$\bPhi(\cdot)$ is defined by the spline basis functions.
For cubic SQR, the derivatives are continuous functions of $\tau$. For linear
SQR, the derivatives are piecewise-constant functions of $\tau$ with possible discontinuity
or jump at some points in $\{ \tau_\ell\}$. 
The bootstrap method discussed in the proceeding section can be extended straightforwardly to produce a confidence band for the derivatives. Having the ability to offer an assessment for the derivatives 
of the regression coefficients  is an added benefit  of the SQR method in comparison with QR.
Based on  the estimated derivatives, one can derive an estimate of the form 
$\bx^T \hat{\dot{\bmbeta}}(\tau)$  for the conditional quantile density
function $(d /d\tau)  F^{-1}(\tau \mid \bx) := \bx^T \dot{\bmbeta}_0(\tau)$.

\section{Simulation}

Besides offering a concise representation of the functional regression coefficients,
an important premise of the SQR method is its capability of producing a more accurate
estimate of these coefficients than that  by the QR method when the underlying functions vary smoothly across quantiles. Our  simulation study serves to evaluate  this capability.

The first experiment employs the linear model used by Koenker (2005, p.\ 158), i.e.,
\eqn
y_t = a_0 + a_1 x_{1,t} + a_2 x_{2,t} + \ep_t \quad (t=1,\dots,n),
\label{lin}
\eqqn
where $x_{1,t} := q(t/(n+1))$ and $x_{2,t} := |q(t/(n+1))|$, with $q(\cdot)$ 
being the standard normal quantile function and $\{ \ep_t \}$ being
an i.i.d.\ sequence of standard normal random variables. We set $a_0 :=1.0$, $a_1 := 2.0$, and 
$a_2 := 1.5$. The conditional quantile function
of $y_t$ given $\bx_t := [1,x_{1,t},x_{2,t}]^T$ takes the form (\ref{F}) with 
$\bmbeta_0(\cdot) := [a_0 + q(\cdot), a_1, a_2]^T$. In this example, the first element of $\bmbeta_0(\cdot)$
is a nonlinear function of $\tau$ and the  remaining elements are constants that do not depend on $\tau$.

To evaluate the estimation accuracy, we employ the total mean absolute error
\eqn
\text{MAE} := E \bigg\{ \frac{1} {|\cT| } \sum_{\tau \in \cT}  \|\hat{\bmbeta}(\tau) - \bmbeta_0(\tau)\|_1  \bigg\},
\label{mae}
\eqqn
where $\cT \subset [a,b]$ is a set of quantile levels for evaluation, with $|\cT|$ denoting the cardinality of $\cT$.
Figure~\ref{fig:lin} shows the MAE from 2000 Monte Carlo runs of linear and cubic SQR estimates of  $\bmbeta_0(\cdot)$ as a function of the smoothing parameter {\tt spar}. The estimates are obtained by solving (\ref{sqr}) with $\{ \tau_\ell \}
:= \{ 0.04,0.06,\dots,0.96\}$ and the MAE  in (\ref{mae})  is evaluated with $\cT := \{ \tau_\ell \}$.

\begin{figure}[t]
\centering
\includegraphics[height=2.75in,angle=-90]{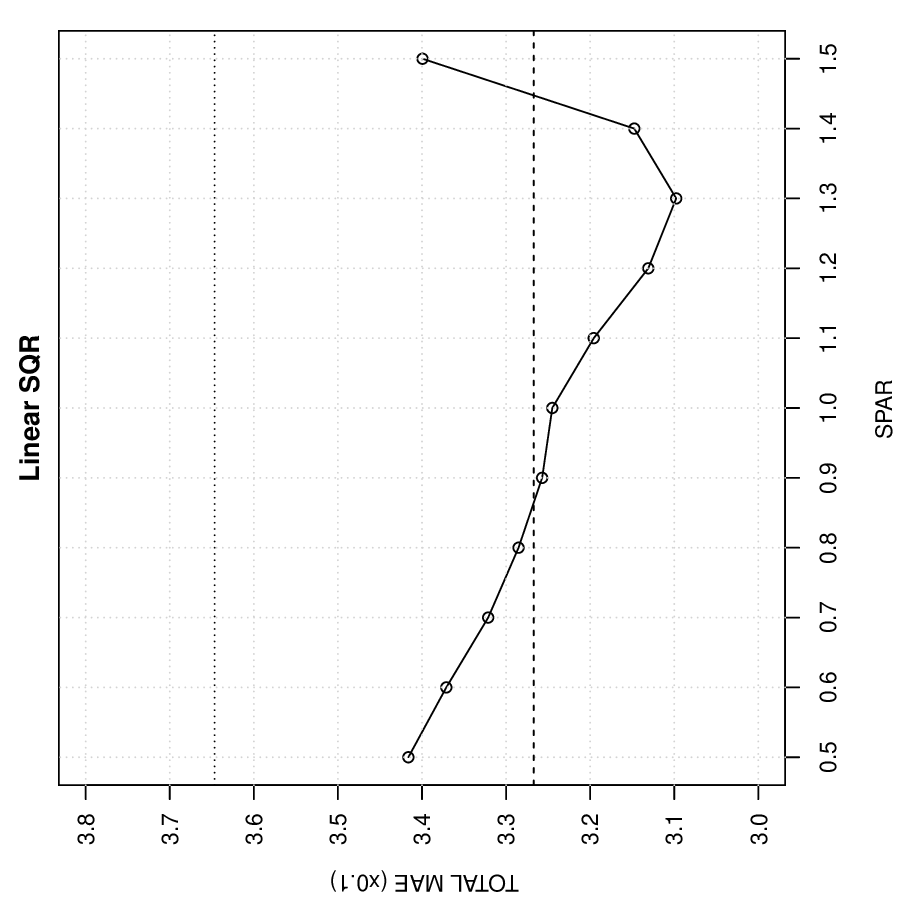}  
\includegraphics[height=2.75in,angle=-90]{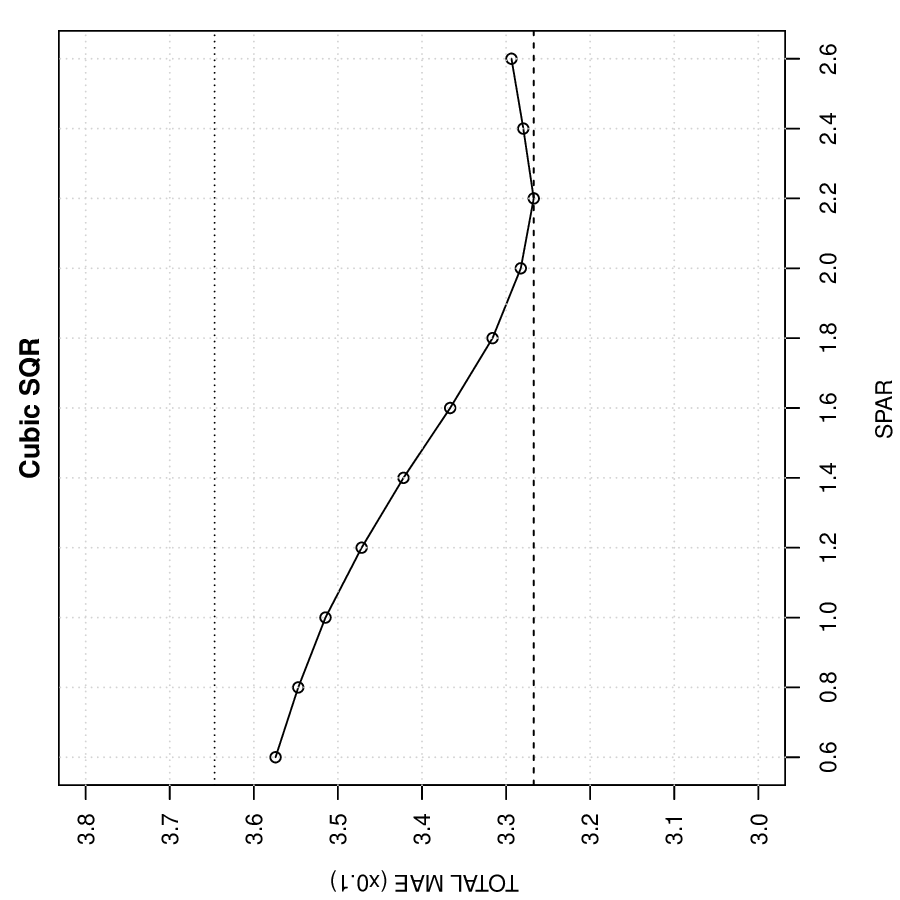}  
\caption{Total mean absolute error (MAE) of linear SQR estimates  (left) and cubic SQR estimates (right) against  
smoothing parameter {\tt spar}  for the regression model  (\ref{lin}) with $n=200$.
Horizontal dotted line represents the error of QR estimates. 
Horizontal dashed line represents the best error achieved by the SQR estimator of Li and Megiddo (2026). 
Results are based on 2000 Monte Carlo runs.}
 \label{fig:lin}
\end{figure}

As can be seen in Figure~\ref{fig:lin}, both linear and cubic SQR outperform QR  (depicted by horizontal dotted  line) for a range of {\tt spar} values not too small or too large. Obviously, the estimates of the second and third elements in $\bmbeta_0(\cdot)$ benefit from a large {\tt spar} for noise reduction, but when {\tt spar} is too large,  the  excessive bias in the estimate of the first element becomes dominant, resulting in an increase of MAE.
A similar behavior is observed by Koenker (2005, p.\ 158) regarding the effect of the bandwidth parameter in the kernel smoother.

The linear SQR shown in the left panel of Figure~\ref{fig:lin}
outperforms the SQR  of Li and Megiddo (2026) (horizontal dashed  line) in terms of the best MAE, thanks mainly to the replacement of cubic splines by linear splines which are 
better suited for the second and third elements in $\bmbeta_0(\cdot)$. 
The cubic SQR shown in the right panel of Figure~\ref{fig:lin} is comparable 
to the SQR of Li and Megiddo (2026) in terms of the best MAE,
so the replacement of $\ell_1$-norm with $\ell_2$-norm in the penalty makes no noticeable  difference in this example.

\medskip

The second experiment is based on a quantile autoregressive (QAR) model for time series 
(Koenker 2005, p.\ 262), i.e.,
\eqn
y_t = a_0(u_t) + a_1(u_t) y_{t-1} \quad (t=1,\dots,n),
\label{sim2}
\eqqn
where $\{ u_t \}$ is an i.i.d.\  sequence of $U(0,1)$ random variables, $a_0(\tau) := 0.1 q(\tau)$ (nonlinear) and $a_1(\tau)  := 0.85+0.1 \tau + 0.25 (\tau-0.5) I(\tau > 0.5)$ (piecewise linear).
The conditional quantile function of $y_t$ given $\bx_t := [1,y_{t-1}]^T$ takes the form (\ref{F})
with $\bmbeta_0(\cdot) := [ a_0(\cdot),a_1(\cdot)]^T$.

\begin{figure}[p]
\centering
\includegraphics[height=2.75in,angle=-90]{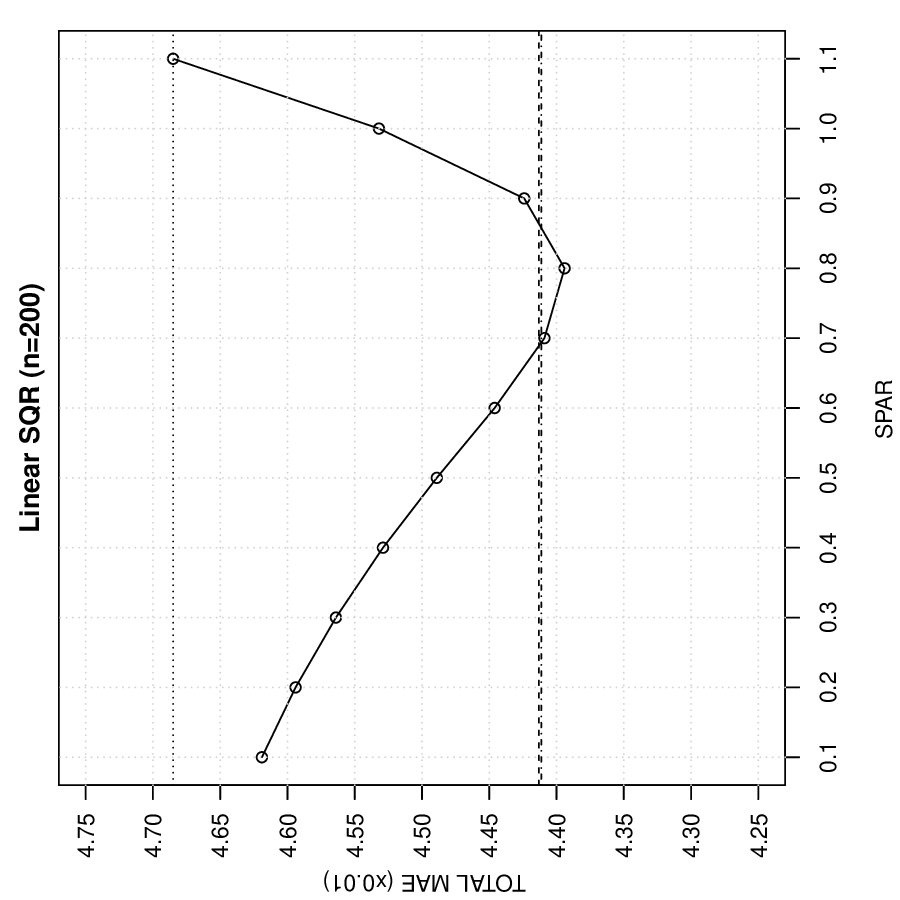} 
\includegraphics[height=2.75in,angle=-90]{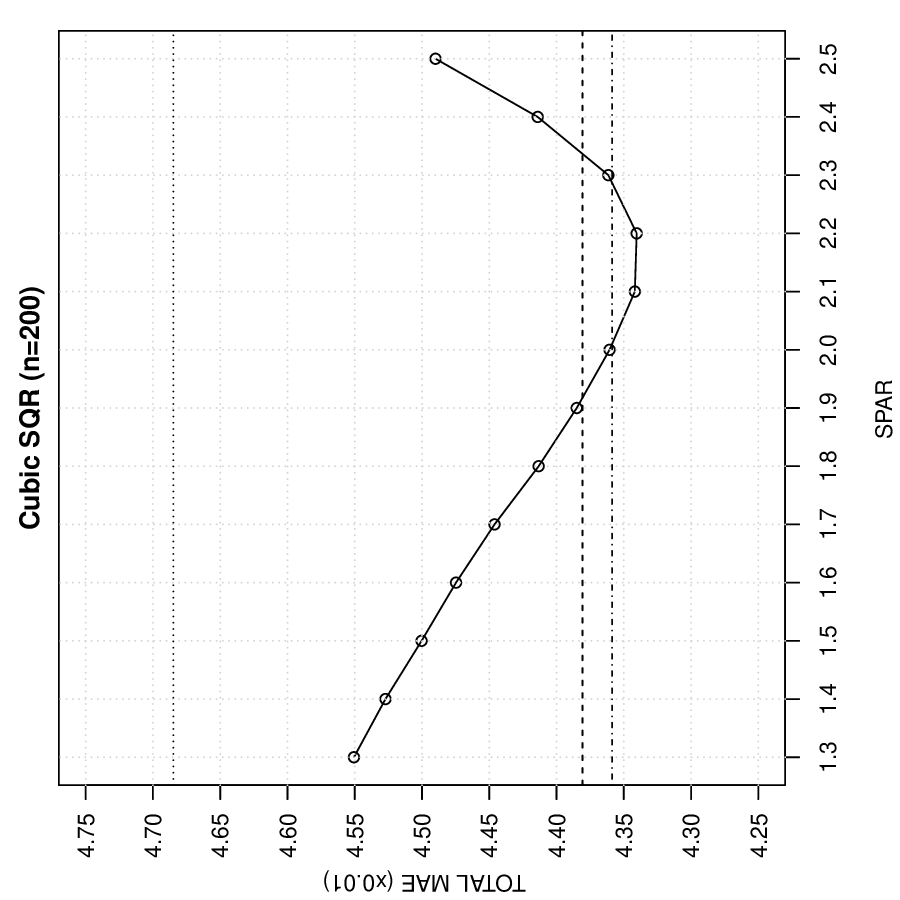}  
\\
\centering{\small (a)} \\
\vspace{-0.2in}
\includegraphics[height=2.75in,angle=-90]{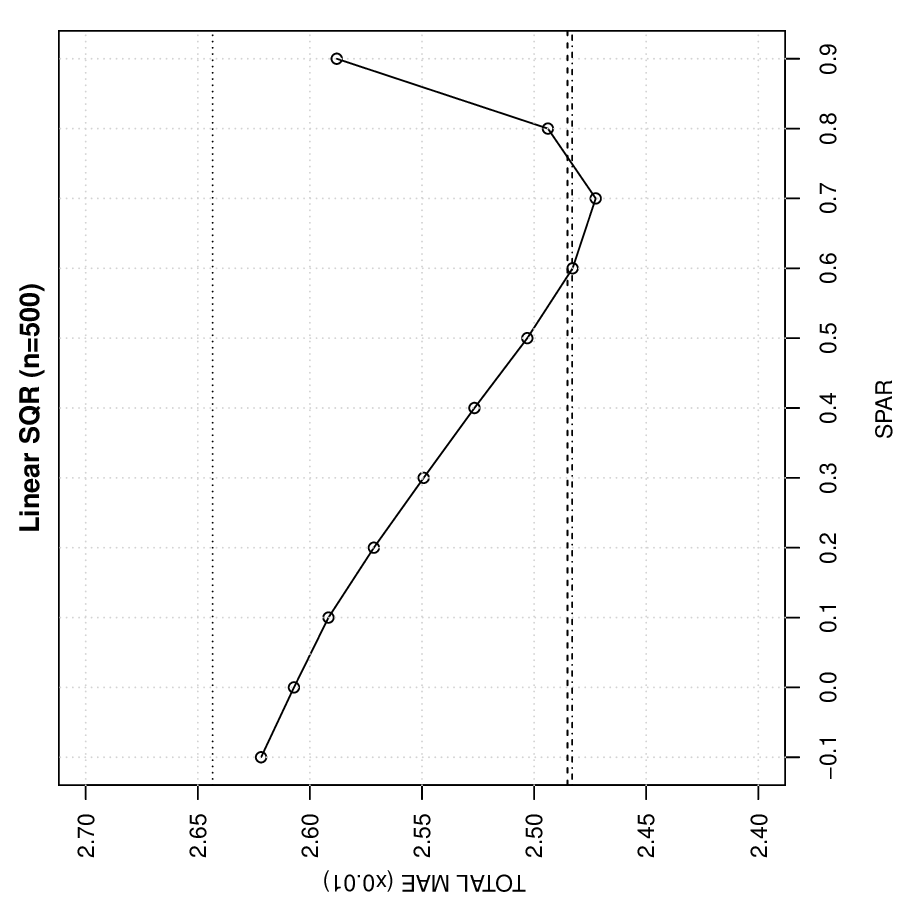} 
\includegraphics[height=2.75in,angle=-90]{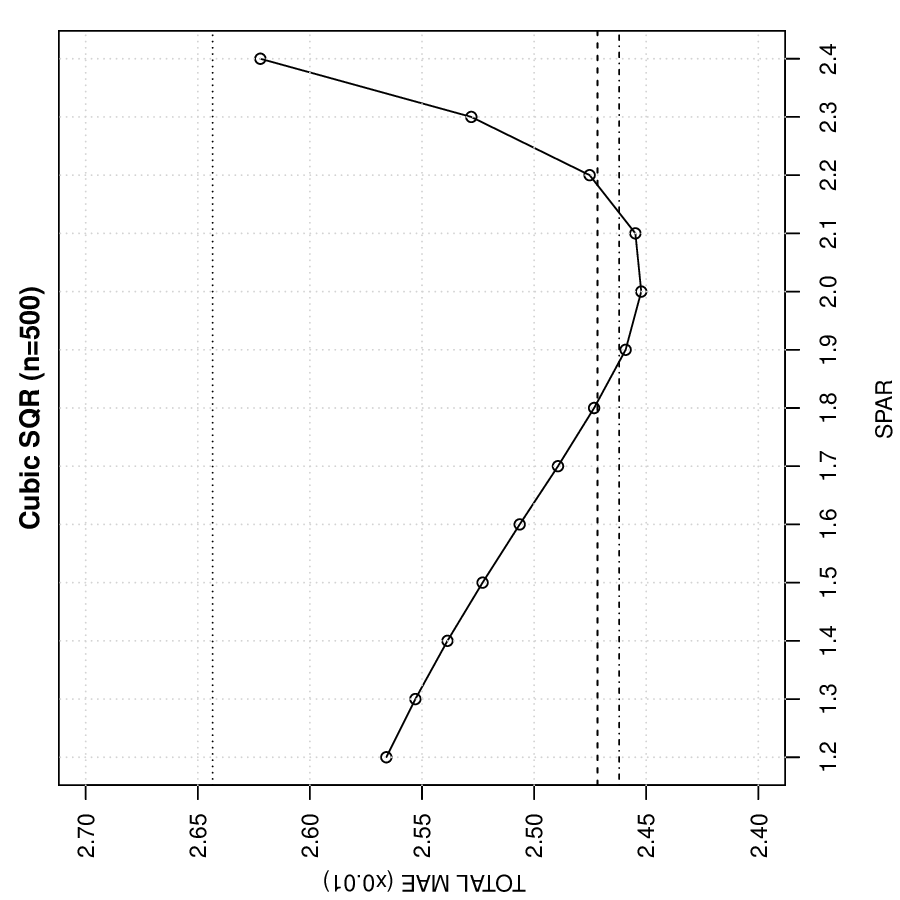}
  \\
\centering{\small (b)} 
\caption{Total mean absolute error (MAE) of linear SQR estimates  (left) and cubic SQR estimates (right) against  
smoothing parameter {\tt spar} for the QAR model  (\ref{sim2}) with (a)  $n=200$ and (b) $n=500$. 
Horizontal dotted line represents the error of QR estimates. Horizontal dashed line and
dash-dotted line represent the error of SQR estimates with {\tt spar} selected by AIC and BIC,
respectively. Results are based on 2000 Monte Carlo runs.} \label{fig:sim2}
\end{figure}

Figure~\ref{fig:sim2} shows the total MAE, as a function of {\tt spar}, of the linear
and cubic SQR estimates of $\bmbeta_0(\cdot)$  
based on 2000 Monte Carlo runs with $n=200$ and $n=500$, calculated 
with $\cT := \{ \tau_\ell\} := \{0.05,0.07,\dots,0.95\}$.
The result in Figure~\ref{fig:sim2} confirms again that the SQR estimates are able to improve 
the QR estimates (depicted by horizontal dotted line) for a range of {\tt spar} values. In addition,
both AIC and BIC (horizontal dashed line and dash-dotted line) 
work reasonably well for automatic selection of {\tt spar}, 
with BIC yielding slightly better estimates especially for the cubic SQR.
Unlike the previous example, the best MAE of the cubic SQR is smaller than that of the linear SQR in this example.

It is worth pointing out that although AIC and BIC produce reasonably good results in this example,
these criteria are not necessarily smooth as functions of the smoothing parameter. Figure~\ref{fig:aic}
provides an example of these functions. Therefore, it is possible that a line search algorithm produces
a local minimum instead of a global one.

\begin{figure}[t]
\centering
\includegraphics[height=5.5in,angle=-90]{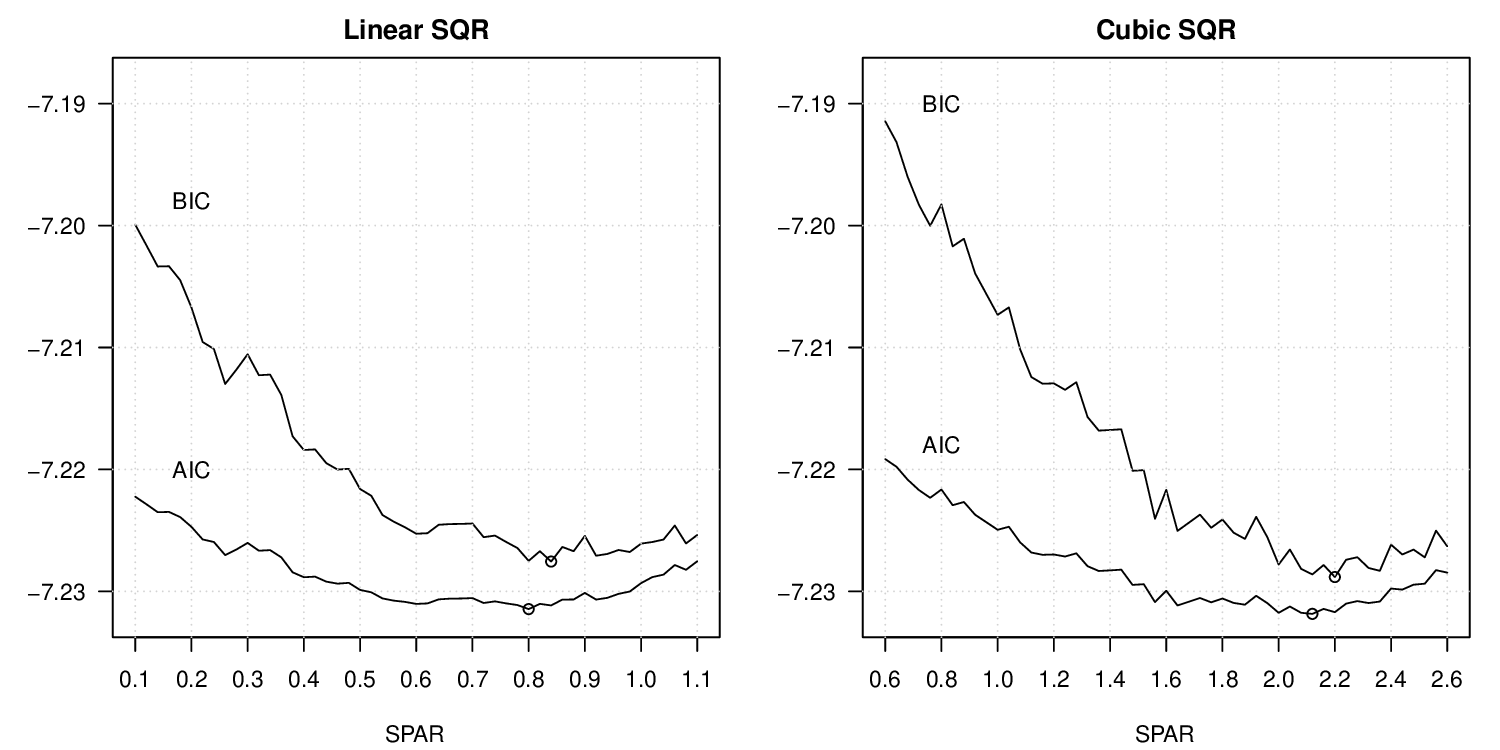} 
\caption{Plot of AIC and BIC as functions of {\tt spar} for a simulated data set from (\ref{sim2}). 
Circle identifies the global minimum.} 
\label{fig:aic}
\end{figure}

\begin{table}[t]
\begin{center} 
\caption{Mean Absolute Error  of SQR Estimates for Coefficients in (\ref{sim2})} \label{tab:err}
{\small
\begin{tabular}{cc|ccccc|ccccc} \hline
& & \multicolumn{5}{c|}{MAE of $\hat{a}_0(\tau)$ (x0.001)} & \multicolumn{4}{c}{MAE of $\hat{a}_1(\tau)$ (x0.01)} \\
$n$  & $\tau$ & QR & SQR$^a$ & SQR$^b$ &SQR$^c$ &  QR-S & QR &  SQR$^a$ &  SQR$^b$ & SQR$^c$  & QR-S \\  \hline
 200 & 0.25  & 9.170 & 8.737 & {\bf 8.563}  & 8.751 & 9.104 & 3.603 & {\bf 3.305} & 3.319 & 3.359 &  3.572  \\
        & 0.50  & 8.411 & 7.845 & {\bf 7.719}  & 8.049 & 8.364 & 3.014 & 2.753         & {\bf 2.659} & 2.831 &  3.000 \\
        & 0.75  & 9.506 & 9.174 & {\bf 9.011}  & 9.213 & 9.457 & 3.650 & {\bf 3.423} & 3.443 &  3.479 & 3.637 \\
 500 & 0.25 & 5.080 & 5.007 & {\bf 4.937}  & 4.986 & 5.051 & 1.983 & {\bf 1.830}  & 1.839  & 1.864 & 1.956  \\
        & 0.50 & 4.697 & 4.407 & {\bf 4.395}  & 4.492 & 4.669 & 1.744 & 1.661         & {\bf 1.624}  & 1.671 & 1.733 \\
        & 0.75 & 5.461 & 5.229 & {\bf 5.192}  & 5.263 & 5.425 & 1.986 & {\bf 1.855} & 1.871  & 1.893 & 1.979  \\
\hline
\end{tabular} 
}
\end{center}
{\footnotesize 
\begin{center}
\begin{minipage}{6in}
$a$, linear SQR, {\tt spar} = 0.9 $(n=200)$ and 0.8 ($n=500$); 
$b$, cubic SQR, {\tt spar} = 2.3 $(n=200)$ and 2.2 ($n=500$); 
$c$, SQR of Li and Megiddo (2026), {\tt spar} = 0.7 $(n=200)$ and 0.6 ($n=500$); QR-S, post-smoothing of QR estimates by {\tt smooth.spline}. Boldface font highlights the smallest error in each case.
Results are based on 2000 Monte Carlo run. 
\end{minipage}
\end{center}
}
\end{table}

While  aimed mainly at estimating $\bmbeta_0(\cdot)$ as a function, 
the SQR method  is also expected to  improve the point estimation at a single quantile of interest
under smoothness conditions. This added benefit  is confirmed by the  result  in Table~\ref{tab:err}, which  contains the mean absolute error of the QR and SQR estimates for $a_0(\tau)$ and $a_1(\tau)$ in (\ref{sim2}) at fixed quantile levels $\tau=0.25$, $0.5$, and $0.75$, calculated from 2000 Monte Carlo runs with $n=200$ and $500$. 
For comparison, Table~\ref{tab:err} also contains the result of  the SQR of Li and Megiddo (2026) 
and the result of a post-smoothing method, labeled QR-S, where the point estimates are obtained by smoothing  the QR estimates obtained at $\{ \tau_\ell\}$ using the R function {\tt smooth.spline} (R Core Team 2024). 

According to Table~\ref{tab:err}, the linear and cubic SQR  not only outperform 
QR and QR-S but also outperform the SQR of  Li and Megiddo (2026)
at each quantile level of interest in terms of the MAE.
Moreover, the cubic SQR offers the best MAE for $a_0(\tau)$ at each of the three quantiles, whereas the linear SQR offers the best MAE for $a_1(\tau)$ at $\tau=0.25$ and $0.75$ where the true function is linear in the neighborhood.

\medskip

The third experiment is intended  to demonstrate the usefulness of an interpolation 
method based on SQR.  In instead of solving the SQR problem  (\ref{sqr}) with $\{ \tau_\ell \} := \cT$,
the interpolation  method solves it  with a smaller set of quantile levels, 
i.e., $\{ \tau_\ell \} \subset \cT$, and then interpolates the fitted values  using the spline basis in (\ref{betah})
to obtain the remaining values on $\cT$. This method not only reduces the computational burden but may improve the estimation accuracy on $\cT$ when compared to the estimates obtained with the full set of quantiles, i.e., $\{ \tau_\ell \} := \cT$. To validate the last point, we employ a random-coefficient regression model
\eqn
y_t = a_0(u_t) + a_1(u_t) x_t \quad (t=1,\dots,n),
\label{sim}
\eqqn
where $a_0(\tau) := 0.1 q(\tau)$ (nonlinear), $a_1(\tau)  := 0.85+0.1 \tau^2 + (\tau-0.5)^2 I(\tau > 0.5)$ 
(piecewise quadratic), $\{ u_t \}$ is an i.i.d.\ sequence  of $U(0,1)$ random variables, 
and $\{ x_t \}$ is an i.i.d.\ sequence of $U(0,5)$ random variables, which is independent of $\{ u_t \}$. 
In this case, the conditional quantile function of $y_t$ given $\bx_t := [1,x_t]^T$ 
satisfies (\ref{F}) with $\bmbeta_0(\cdot) := [a_0(\cdot),a_1(\cdot)]^T$. 

\begin{figure}[p]
\centering
\includegraphics[height=2.75in,angle=-90]{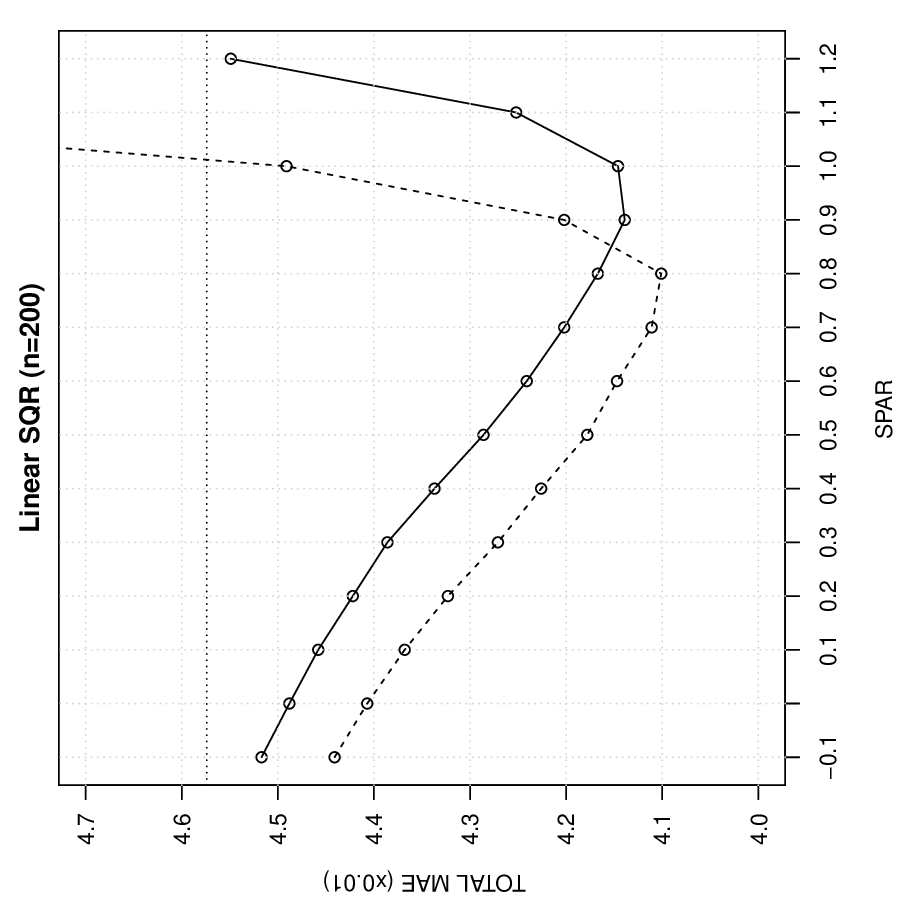} 
\includegraphics[height=2.75in,angle=-90]{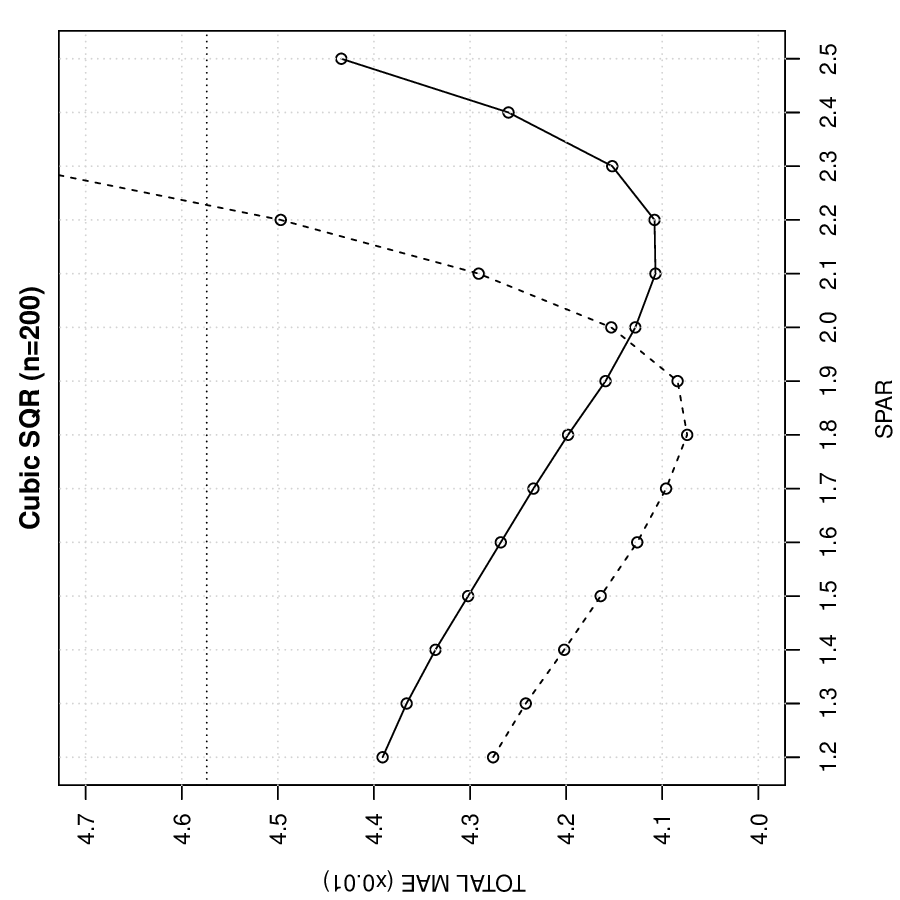}  
\\
\centering{\small (a)} \\
\vspace{-0.2in}
\includegraphics[height=2.75in,angle=-90]{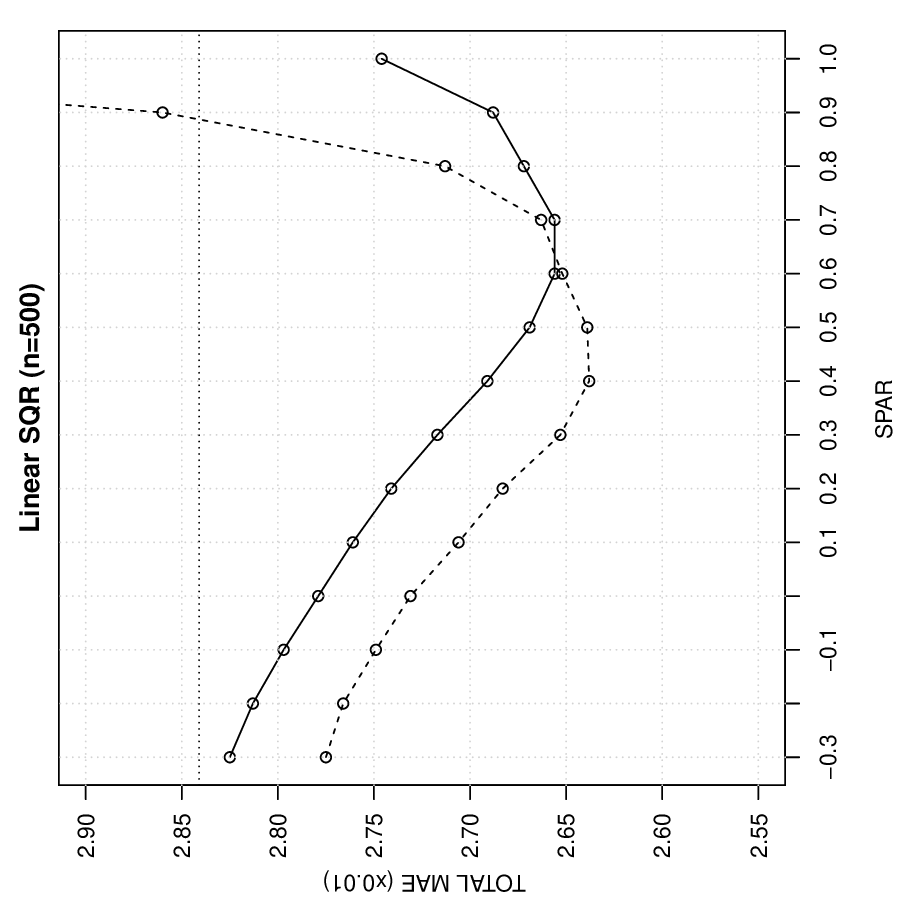} 
\includegraphics[height=2.75in,angle=-90]{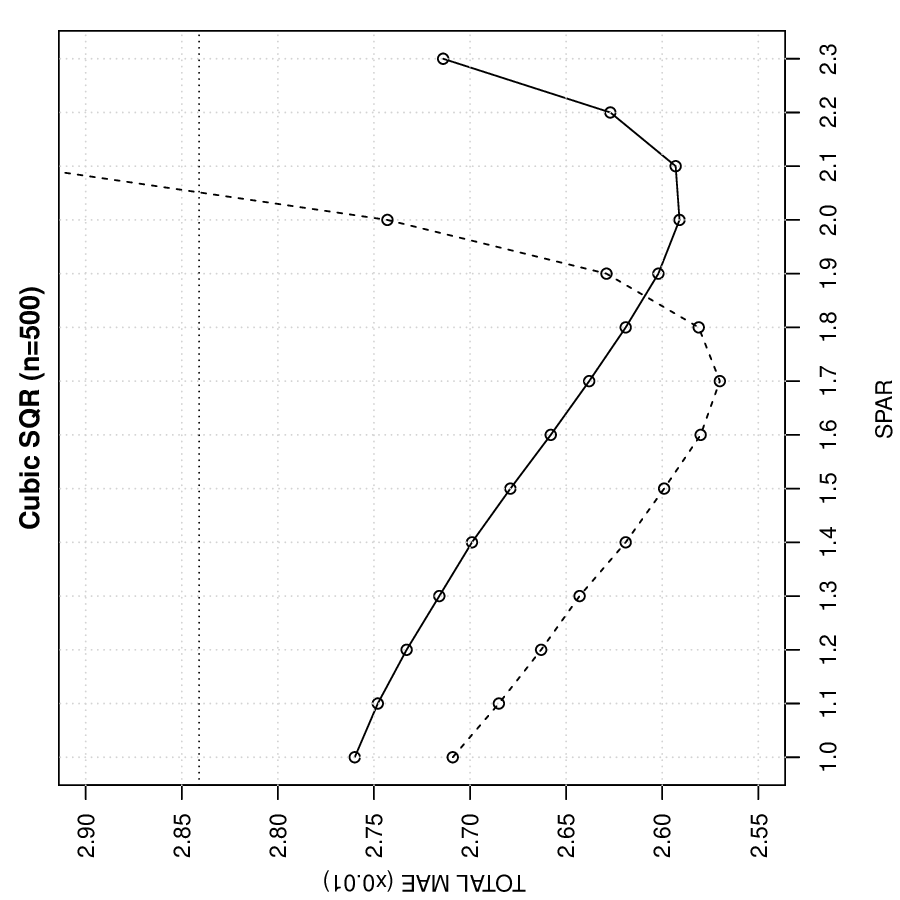}
  \\
\centering{\small (b)} 
\caption{Total mean absolute error (MAE) of linear SQR estimates  (left) and cubic SQR estimates (right)
with different values of smoothing parameter {\tt spar} 
from simulated data by (\ref{sim}) for (a) $n=200$ and (b) $n=500$ with MAE 
evaluated at $\cT := \{ 0.04,0.06,\dots,0.96\}$. 
Solid line,  estimates obtained with $\{ \tau_\ell \} := \{ 0.04,0.06,\dots,0.96\} = \cT$. 
Dashed line, estimates obtained with $\{ \tau_\ell \} := \{ 0.04,0.08,\dots,0.96\} \subset \cT$. 
Horizontal dotted line represents the error of QR estimates.
Results are based on 2000 Monte Carlo runs.} \label{fig:sim}
\end{figure}

Figure~\ref{fig:sim} depicts the MAE calculated from 2000 Monte Carlo runs on $\cT := \{ 0.04,0.06,\dots,$ $0.96\}$ with different values of {\tt spar} for the linear and cubic SQR estimates. The SQR estimates are obtained 
by solving (\ref{sqr})  respectively with the full set $\{ \tau_\ell \} := \cT$ (solid line) and  the subset $\{ \tau_\ell \} := \{ 0.04,0.08,\dots,0.96\} \subset \cT$ (dashed line). Besides confirming the superiority of SQR over  QR (depicted by horizontal dotted line), the result in Figure~\ref{fig:sim} shows that the best MAE of the SQR estimates obtained
by solving (\ref{sqr}) with the subset  is indeed smaller than the best MAE of their counterparts obtained 
by solving (\ref{sqr}) with  the full set. 
Such result occurs when the statistical variability 
of the interpolated values becomes smaller than that of the fitted values obtained using  the full set and 
outweighs the increased bias as a result of interpolation. Needless to say, if the subset   is too coarse for the dynamics of the functional coefficients, the  bias in the interpolated estimates may render them
less accurate than the fitted values obtained using the full set.

\section{Application}

Having demonstrated the ability of the SQR method to produce more accurate estimates
than the QR method using simulated data, we now present two real-data examples  in  this section to demonstrate the application of the method. 

\begin{figure}[p]
\centering
\includegraphics[height=5.5in,angle=-90]{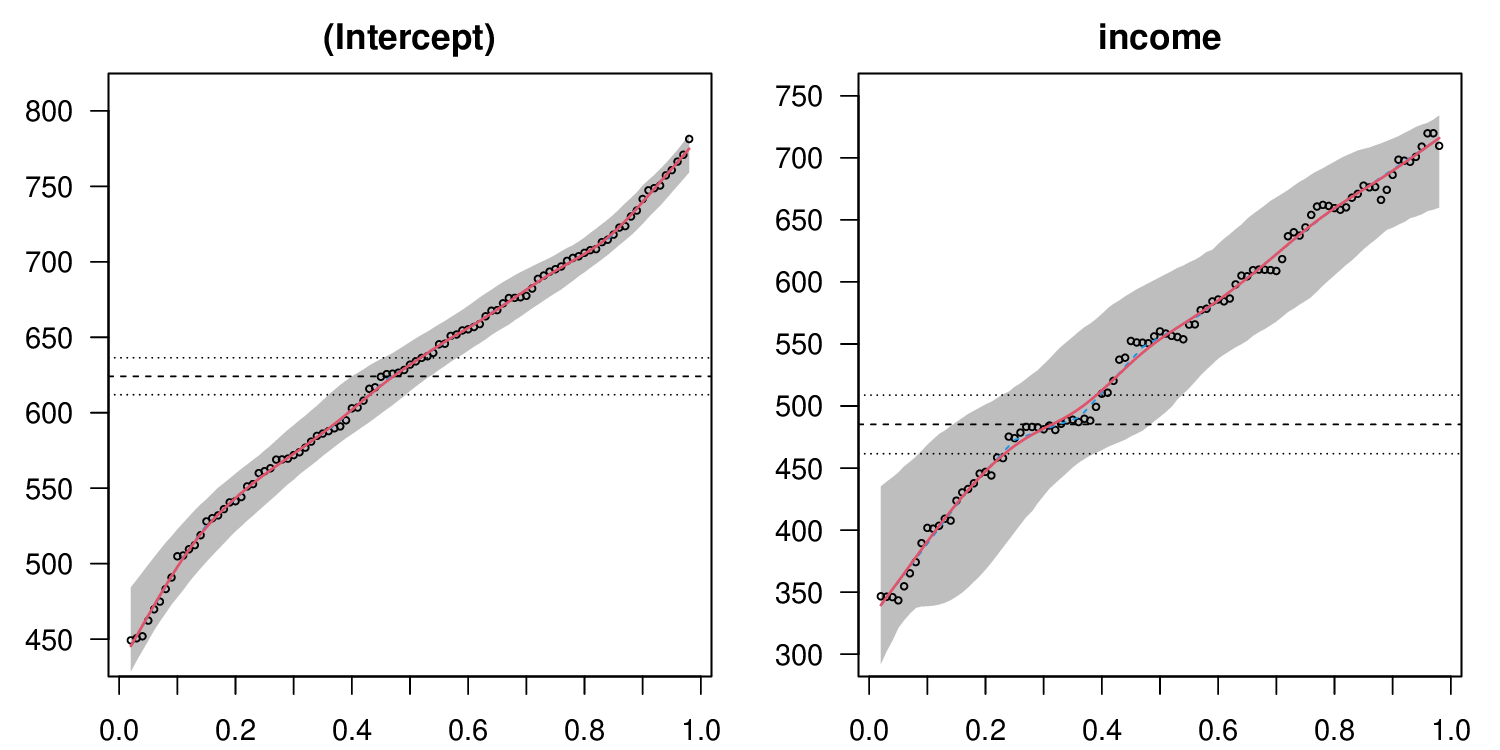} \\
{\small (a)} \\
\vspace{-0.2in}
\includegraphics[height=5.5in,angle=-90]{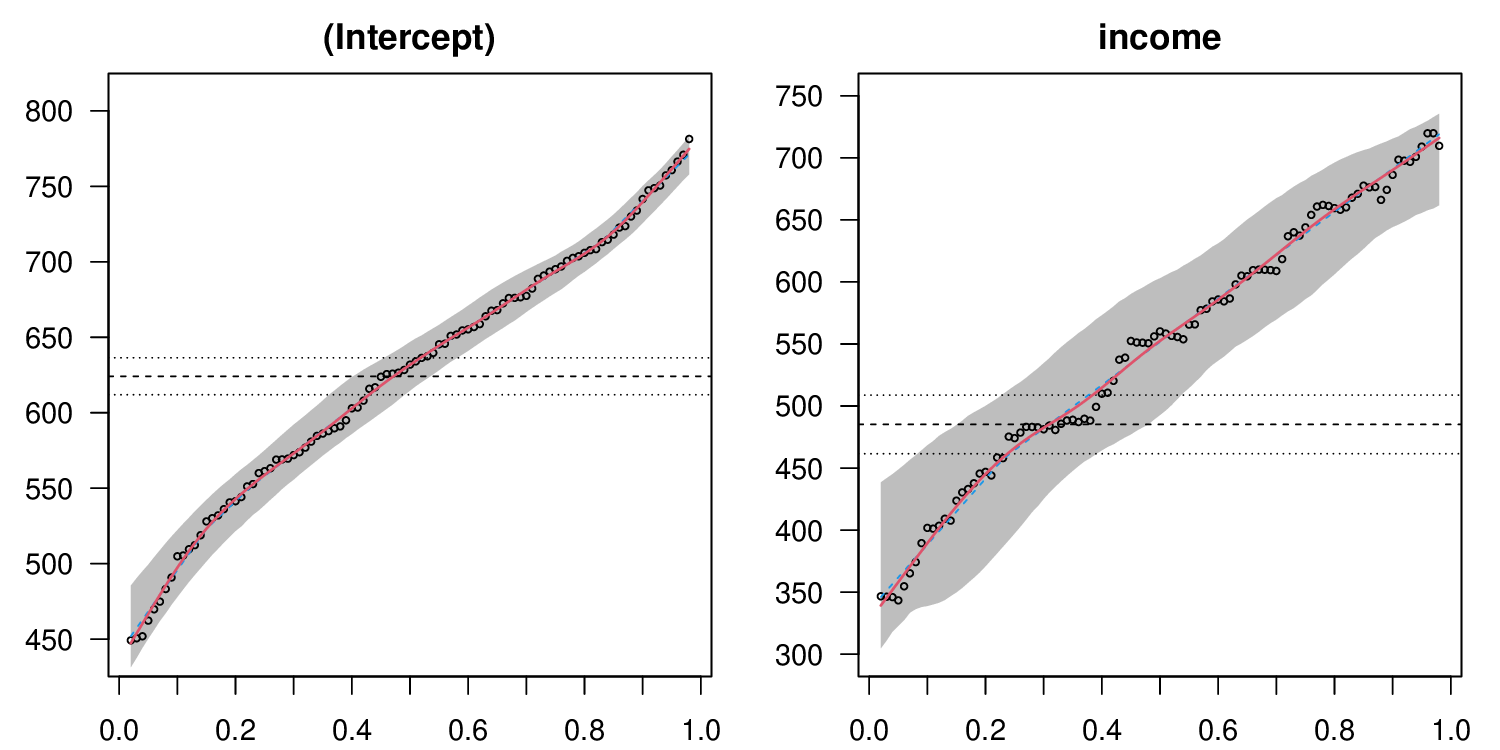} \\
{\small (b)}
\caption{Cubic SQR estimates of the intercept (left) and the coefficient of household income (right)  with a 90\% bootstrap confidence band for the Engel food expenditure data. The smoothing parameter is
selected by (a)  AIC and (b) BIC. 
Dashed line, linear SQR. Circles, QR.  Horizontal dashed lines and 
dotted lines show the ordinary least-squares estimate and a 90\% confidence interval in the style of Koenker (2005, p.\ 302). } 
\label{fig:engel}
\end{figure}

\begin{figure}[p]
\centering
\includegraphics[height=5.5in,angle=-90]{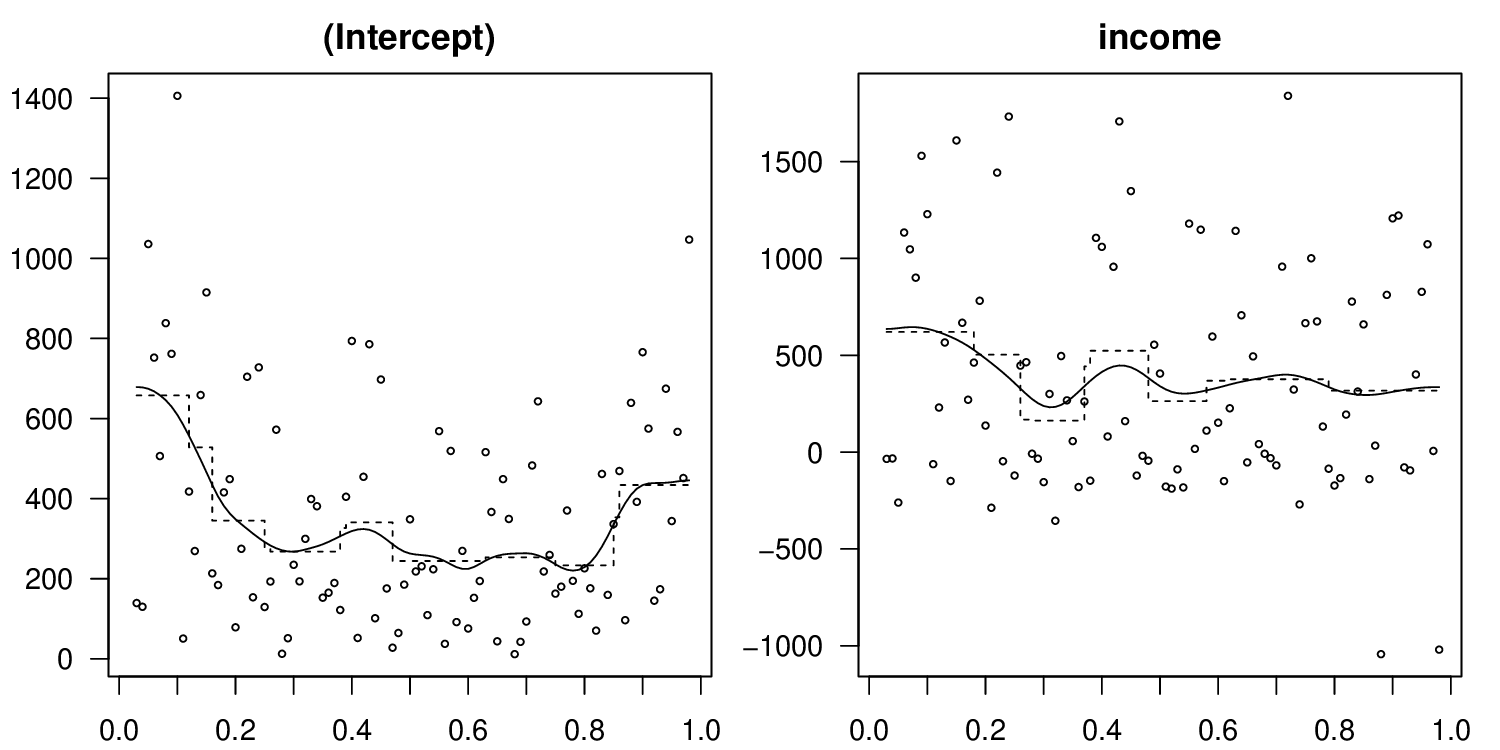} \\
{\small (a)} \\
\vspace{-0.2in}
\includegraphics[height=5.5in,angle=-90]{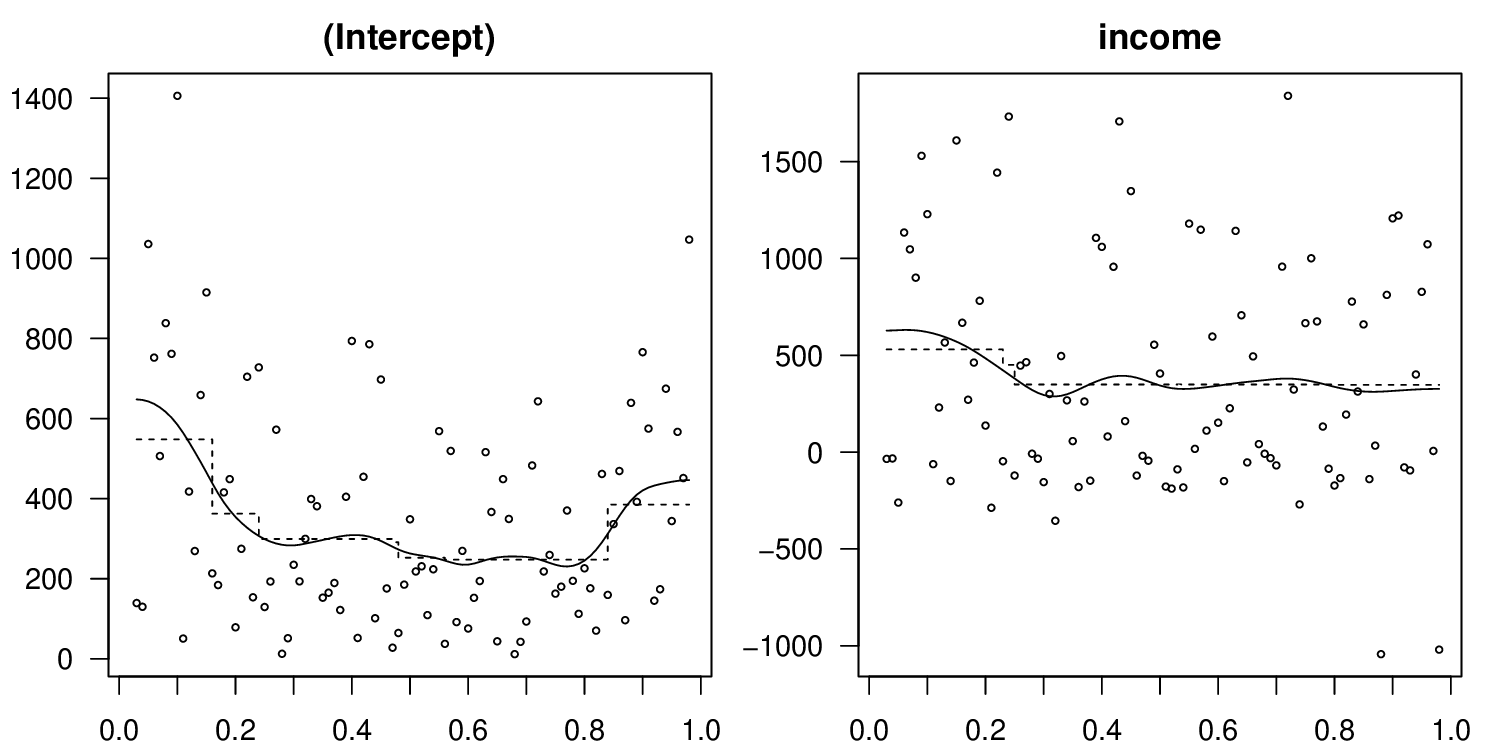} \\
{\small (b)}
\caption{Derivatives of the respective SQR estimates shown in Figure~\ref{fig:engel}.
Solid line, cubic SQR. Dashed line, linear SQR.  
Circles represent finite differences of the QR estimates. } 
\label{fig:engel2}
\end{figure}

The first real-data example is the Engel food expenditure data in the `quantreg' package (Koenker 2005, p.\ 78
and pp.\ 300--302). This data set contains $n=235$ records. The objective is to estimate the impact of the household income $x_t$ on the household food expenditure $y_t$ using a quantile regression model of the form (\ref{F}) with $\bx_t := [1, (x_t - \mu)/1000]^T$,  where   $\mu$ is the mean household income. 

We estimate  the intercept and the coefficient of household income
 as continuous functions of $\tau$ by solving the SQR problem in (\ref{sqr}) 
 with $\{ \tau_\ell \} :=  \{ 0.02,0.03,\dots,0.98\}$. This set of quantile levels was used in Koenker (2005, p.\ 300) to calculate the QR estimates. Figure~\ref{fig:engel} depicts the cubic SQR estimates (solid line) as functions of $\tau$
together with the linear SQR estimates (dashed line) and the QR estimates (circles).
In comparison with QR, the smoothing effect of SQR is particularly noticeable in the coefficient 
of household income. Moreover,  Figure~\ref{fig:engel}  includes
a 90\% pointwise bootstrap confidence band (shaded gray area) for the cubic SQR estimates.
It replaces the approximate confidence band used in Li and Megiddo (2026) which is based on the asymptotic
theory of the QR estimates and therefore does not account for the smoothing effect of SQR.
 Similarly to Figure A.2 in Koenker (2005, p.\ 302), Figure~\ref{fig:engel} also shows the estimated coefficients of least-squares regression  with a 90\% confidence  interval (dashed and dotted lines) for the conditional mean.
One can compare the least-squares estimates with the SQR estimates to assess the quantile-dependent patterns of the latter. 
 
The estimated coefficients by cubic SQR and linear SQR in Figure~\ref{fig:engel} are continuous functions of $\tau$.
 A difference between the cubic SQR estimates and the linear SQR estimates can be appreciated by inspecting 
 their derivatives shown in  Figure~\ref{fig:engel2}. According to  the analytical properties of cubic and linear splines,
the cubic SQR estimates should have continuous derivatives and  the linear SQR estimates
 should have piecewise-constant derivatives with the possibility of discontinuity or jump at some points in $\{ \tau_\ell \}$. This distinction in smoothness is clearly revealed by  Figure~\ref{fig:engel2}. 
Moreover,  Figure~\ref{fig:engel2} also demonstrates that the finite differences of the QR estimates (circles) are
 too noisy  for assessment of the changes in the coefficients with respect to $\tau$.

\medskip

\medskip

\begin{figure}[t]
\centering
\includegraphics[height=4in,angle=-90]{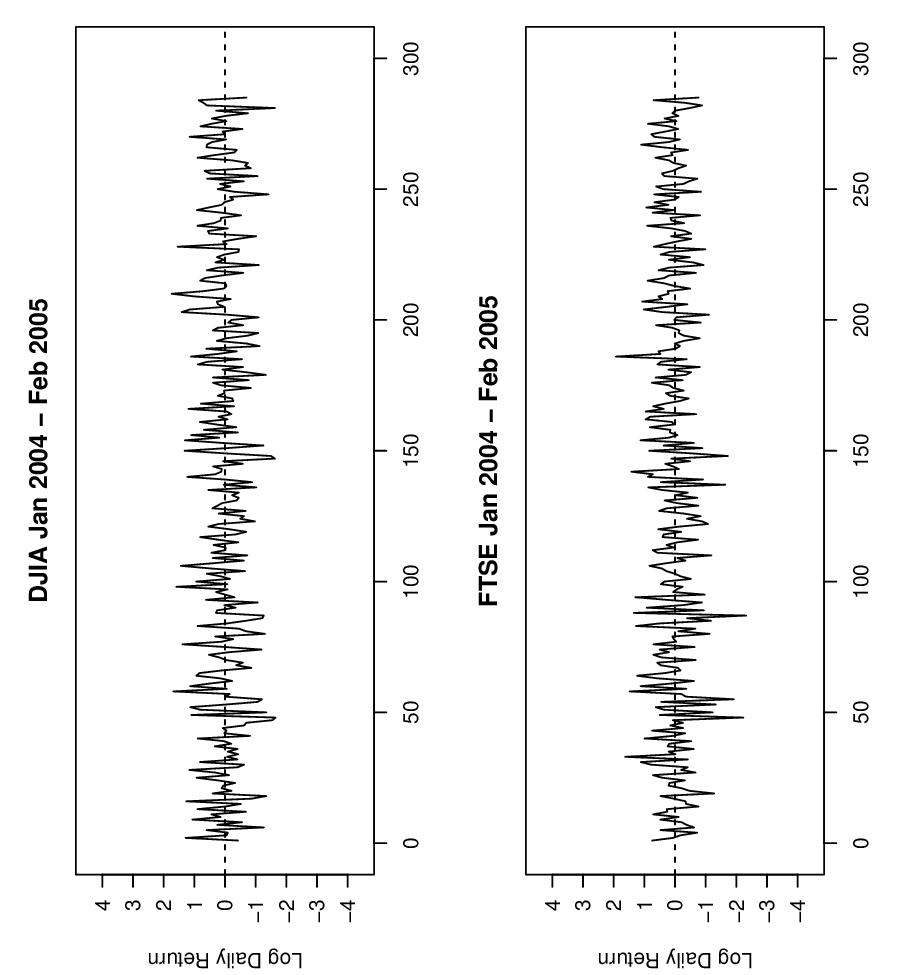} 
\caption{Log daily returns of DIJA (top) and FTSE (bottom) from January 2004 to February 2005. } 
\label{fig:fin1}
\end{figure}

The second real-data example concerns the log daily returns, shown in Figure~\ref{fig:fin1}, of the Dow Jones Industrial Average index  (DJIA) and the Financial Times Stock Exchange  100 index (FTSE).
As representations for stock market activities in New York and London, respectively, 
these indices have been used to study the relationship among international financial markets.
The concept of Granger causality (Granger 1969) is employed in particular to investigate 
the influence of one market on another.
Conventional methods  focus on the effect of Granger causality in the conditional mean using least-squares regression. More recent analyses use the quantile regression method to study the effect of Granger causality  in the conditional quantiles (Cheng et al.\ 2022; Troster 2022; Karpman et al.\ 2022; Hu\'{e} and Leymarie 2025).
These analyses are typically confined to a few selected quantiles. The SQR method 
discussed in this article enables a more comprehensive analysis across all quantiles inside an interval $[a,b] \subset (0,1)$.

With $v_t$ denoting the closing value of FSTE on trading day $t$, the log daily return of FSTE on day $t$ 
is defined as $y_{t} := \log(v_{t}/v_{t-1})$. Similarly, let $x_t$ denote the log daily return of DJIA on day $t$.
Then, the Granger causality of DJIA on FTSE, denoted by DJIA$\rightarrow$FTSE,  can be analyzed through 
a QAR model (Koenker and Xiao 2006) of the form
\eqn
y_t = a_0(u_t) + a_1(u_t) y_{t-1} + c(u_t) x_{t-1},
\label{gc}
\eqqn
where $\{ u_t\}$ is an i.i.d.\ sequence of $U(0,1)$ random variables. The 
conditional quantile function of $y_t$ given $y_{t-1}$ and $x_{t-1}$ takes the form (\ref{F}) with $\bx_t := [1,y_{t-1},x_{t-1}]^T$ and $\bmbeta_0(\cdot) := [a_0(\cdot),a_1(\cdot),c(\cdot)]^T$. For each $\tau$, 
the quantity $c(\tau)$ is the coefficient of feedback from DJIA 
on day $t-1$ to the $\tau$th conditional quantile of FTSE on day $t$, subtracting 
the feedback from FTSE on day $t-1$.
. If $c(\tau) \ne 0$, then the Granger causality  DJIA$\rightarrow$FTSE exists at the $\tau$th conditional quantile.
In this case, one can use $|c(\tau)|$ to measure the strength of the causality. A positive causality, $c(\tau) > 0$, implies that  the $\tau$th quantile of FTSE on day $t$ moves  in the same direction as DJIA on day $t-1$, whereas a negative causality, $c(\tau) < 0$, implies that
the quantile moves in the opposite direction of DJIA. 
Similarly, the Granger causality of FTSE on DJIA, denoted by FTSE$\rightarrow$DJIA, can be analyzed using the QAR model (\ref{gc}) except that $\{ y_t \}$ stands for DJIA and $\{ x_t\}$  for FTSE.

\begin{figure}[p]
\centering
\includegraphics[height=5.5in,angle=-90]
{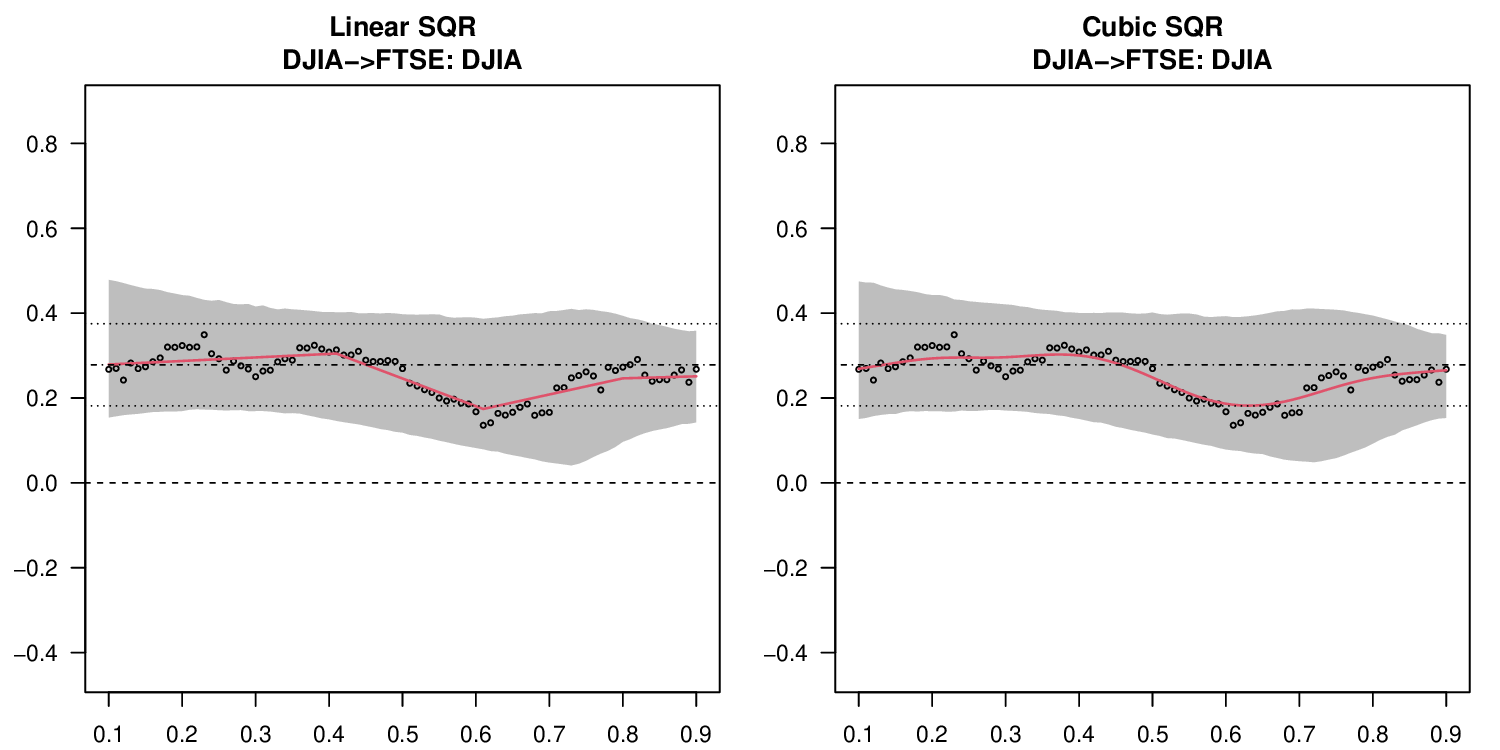} \\
{\small (a)} \\
\vspace{-0.2in}
\includegraphics[height=5.5in,angle=-90]
{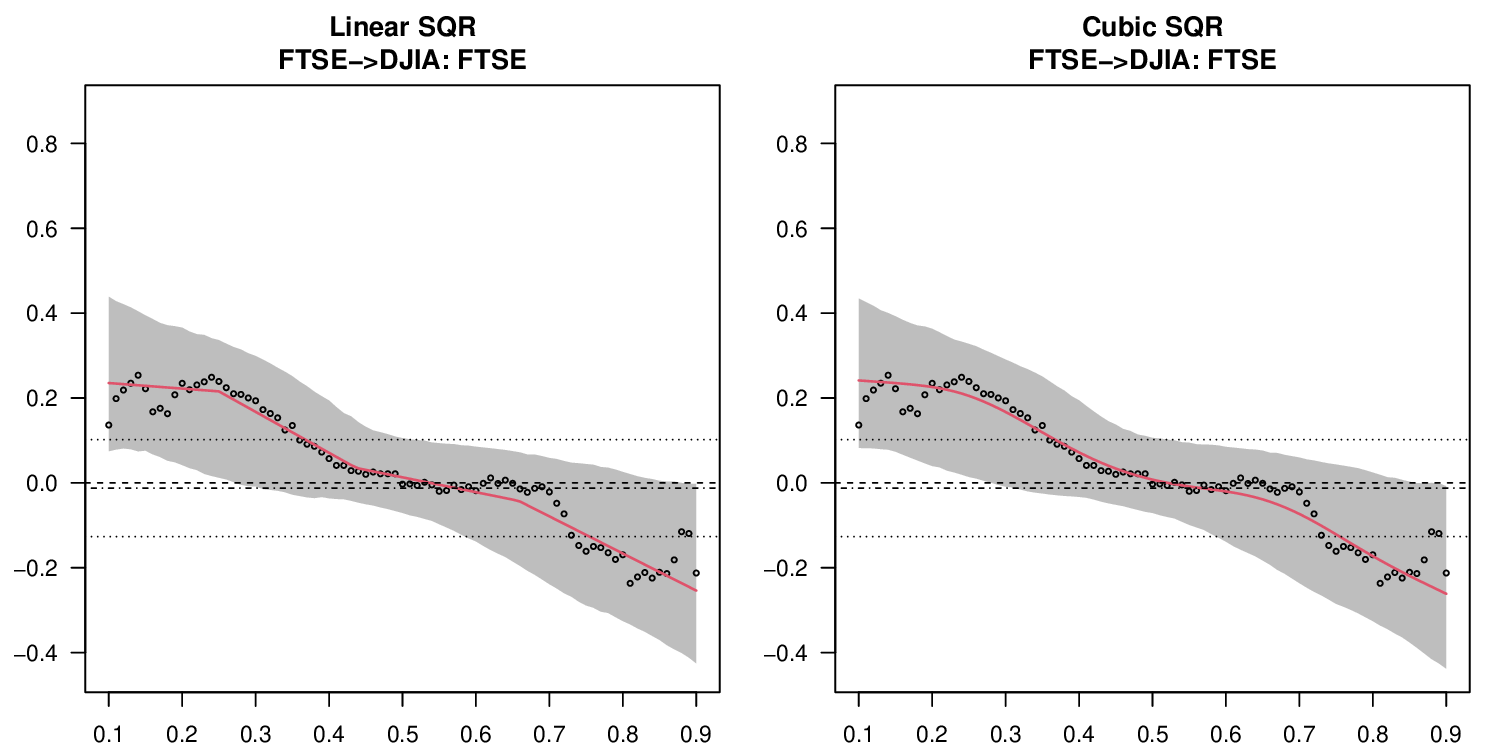} \\
{\small (b)}
\caption{Analysis of Granger causality in quantiles for the log daily returns of DJIA and FTSE from July 2007 to August 2008 shown in Figure~\ref{fig:fin1}. (a) Linear and cubic SQR estimates of $c(\cdot)$  for DJIA$\rightarrow$FTSE.
(b) Linear and cubic SQR estimates of $c(\cdot)$ for FTSE$\rightarrow$DJIA. Smoothing parameters are selected by AIC. Gray area depicts a 90\% bootstrap confidence band by block sampling with block-length equal to 10. Circles represent QR estimates. Horizontal dash-dotted line and dotted lines show 
the ordinary least-squares estimate and a 90\% confidence interval.} 
\label{fig:fin1b}
\end{figure}

To illustrate the implication of Granger causality in quantiles, consider a  hypothetical case of DJIA$\rightarrow$FTSE with $c(0.9) > 0$ . Suppose  the 0.9th quantile of the return of FTSE equals 0.6 when the return of DJIA is zero  on the previous day. This quantile will move up by the amount of $c(0.9)$ if the return of DJIA on the previous day becomes 1 and the return of FTSE on the previous day remains unchanged. An increased 0.9th quantile implies a higher than 0.1 probability for the return of FSTE to take values greater than 0.6. Such effect  may be summarized by simply saying that a positive Granger causality of DJIA on FTSE at higher quantiles would increase the chance for the return of FTSE to take larger values when DJIA rallies on the previous day. Similarly, a positive Granger causality of DJIA on FTSE at lower quantiles would increase the chance for the return of  FTSE to take smaller or more negative values when DJIA declines on the previous day. The amount of increased chance in both cases is directly related to the magnitude of $c(\cdot)$.

Based on the data shown  in Figure~\ref{fig:fin1},
 the linear and cubic SQR estimates of $c(\cdot)$ are depicted in Figure~\ref{fig:fin1b} for detecting the Granger causality DJIA$\rightarrow$FTSE and FTSE$\rightarrow$DJIA, respectively, under the QAR framework (\ref{gc}).
Figure~\ref{fig:fin1b}  also shows the QR estimates  for comparison. 
In addition, Table~\ref{tab:fin1} contains the numerical values of the SQR estimates  at selected quantiles $\tau=0.1$, 0.5, and 0.9.
The confidence bands in Figure~\ref{fig:fin1b} and the confidence intervals in Table~\ref{tab:fin1} 
are constructed by the block bootstrap method. A comparison with the confidence bands 
constructed by the ordinary $(x,y)$-pair sampling 
scheme can be found in the supplementary material.

\begin{table}[t]
\begin{center} 
\caption{Estimation of Granger Causality at Selected Quantiles: Jan 2004 - Feb 2005} \label{tab:fin1}
{\small
\begin{tabular}{c|cc|cc} \hline
& \multicolumn{2}{c|}{DJIA$\rightarrow$FTSE} & \multicolumn{2}{c}{FTSE$\rightarrow$DJIA} \\
 $\tau$ & Linear SQR & Cubic SQR &  Linear SQR & Cubic SQR  \\  \hline
 0.1  & 0.279 (0.154,0.478) & 0.269 (0.151,0.474) & 0.235 (0.075,0.439)   & 0.241 (0.082,0.434)   \\
 0.5  & 0.246 (0.119,0.397) & 0.248 (0.114,0.401) & 0.013 (-0.071,0.105)  & 0.008 (-0.071,0.106) \\
 0.9  & 0.251 (0.143,0.358) & 0.265 (0.153,0.348) & -0.254 (-0.425,-0.003) & -0.261 (-0.435,-0.008) \\
\hline
\end{tabular} 
}
\end{center}
{\footnotesize 
\begin{center}
\begin{minipage}{6in}
Parentheses contain the lower and upper limits of the 90\% confidence band.
\end{minipage}
\end{center}
}
\end{table}

Figure~\ref{fig:fin1b}(a) suggests a positive Granger causality DJIA$\rightarrow$FTSE 
across all quantiles because the confidence band is  well above the zero line which
 represents the case of  no Granger causality $c(\cdot) = 0$. 
 Such causality  is also detected in the conditional  mean by least-squares regression
 (horizontal dash-dotted and dotted lines). Figure~\ref{fig:fin1b}(a) shows that despite a small dip in the upper middle  quantile region, both  linear and cubic SQR estimates of $c(\cdot)$ for  DJIA$\rightarrow$FTSE are largely leveled around the average value  0.254 or 0.255, respectively. 
 In contrast, the SQR estimates for  FSTE$\rightarrow$DJIA, shown in Figure~\ref{fig:fin1b}(b), exhibit a notable decreasing trend which goes from 0.235 or 0.241  at $\tau=0.1$ to $-0.254$ or $-0.261$ at $\tau=0.9$ according to Table~\ref{tab:fin1}.  This leads to a positive causality at low quantiles ($\tau \le 0.27$ or 0.28)
and a possible negative causality  at high quantiles ($\tau \ge 0.88$ or 0.86). 
This quantile-dependent pattern of causality is not reflected in the conditional mean by least-squares regression.

\begin{figure}[t]
\centering
\includegraphics[height=4in,angle=-90]{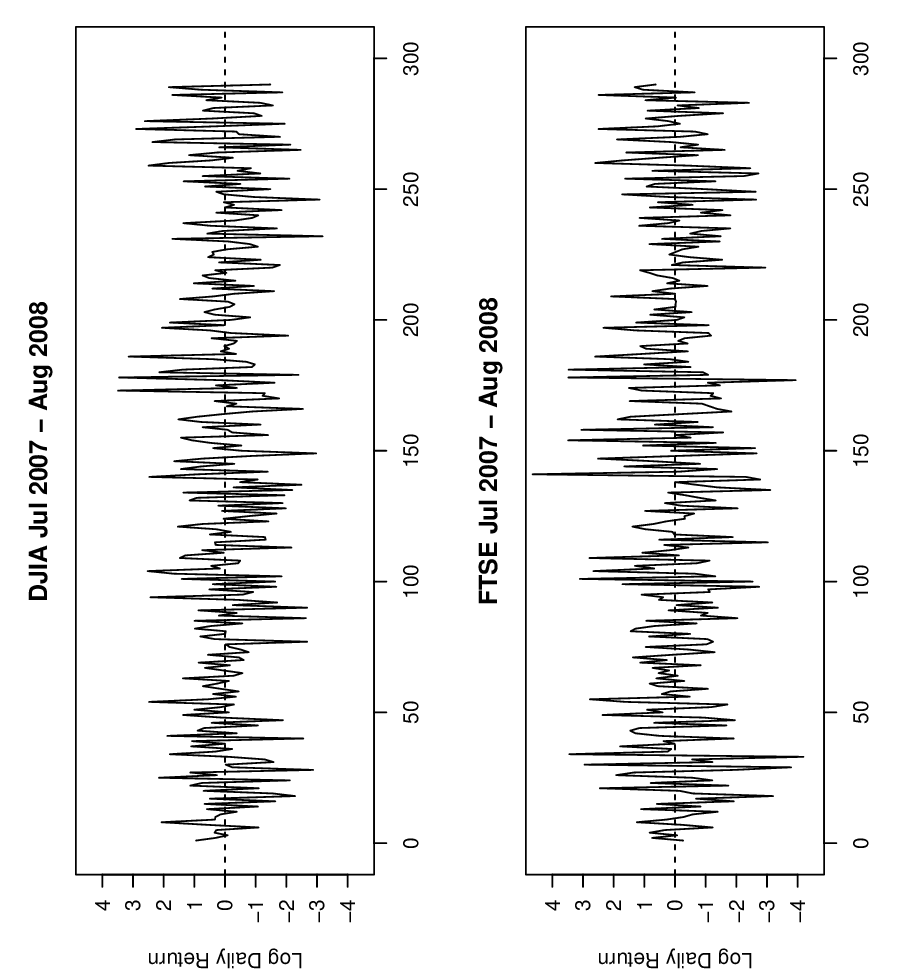} 
\caption{Log daily returns of DIJA (top) and FTSE (bottom) from July 2007 to August 2008. } 
\label{fig:fin2}
\end{figure}

The 13-month period  from January 2004 to February 2005 is apparently less volatile than 
the period from July 2007 to August 2008 for which  the log daily returns are shown  in Figure~\ref{fig:fin2}. 
The latter is a period that leads to the stock market crash caused by the collapse of Lehmann Brothers. 
An interesting question is whether the increased volatility coexists with a change in the patterns of Granger causality 
between DJIA and FTSE. An affirmative answer is supported by the result  in Figure~\ref{fig:fin2b}, which depicts 
the SQR estimates of $c(\cdot)$ for DJIA$\rightarrow$FTSE and FTSE$\rightarrow$DJIA, respectively, 
based on the data shown in Figure~\ref{fig:fin2}. It is also supported by the result in  Table~\ref{tab:fin2}, which contains the numerical values of the SQR estimates  at selected quantiles.

\begin{figure}[p]
\centering
\includegraphics[height=5.5in,angle=-90]
{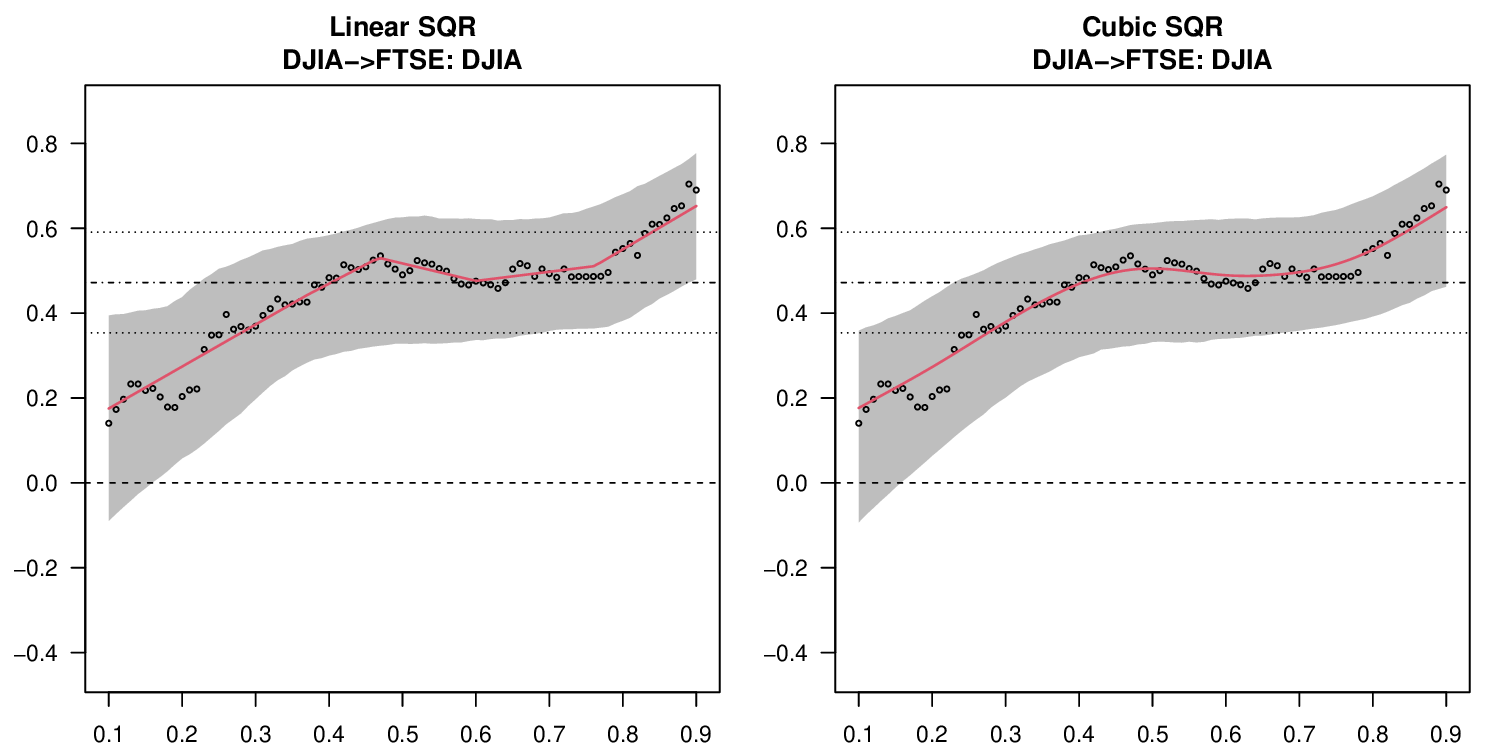} \\
{\small (a)} \\
\vspace{-0.2in}
\includegraphics[height=5.5in,angle=-90]
{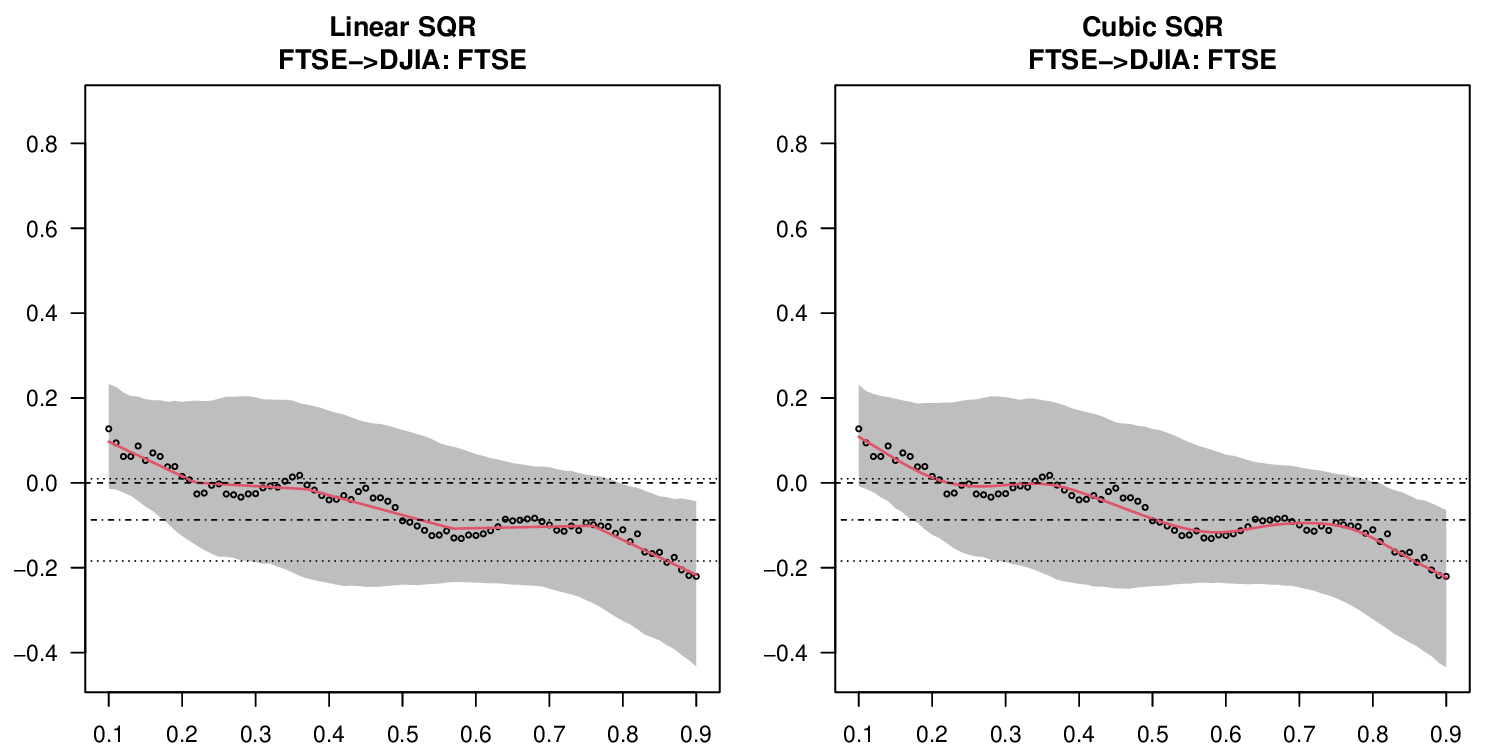} \\
{\small (b)}
\caption{Same analysis as Figure~\ref{fig:fin1b} for the log daily returns of DJIA and FTSE from July 2007 to August 2008 shown in Figure~\ref{fig:fin2}. } 
\label{fig:fin2b}
\end{figure}

\begin{figure}[p]
\begin{center} 
\caption{Estimation of Granger Causality at Selected Quantiles: Jul 2007 - Aug 2008} \label{tab:fin2}
{\small
\begin{tabular}{c|cc|cc} \hline
& \multicolumn{2}{c|}{DJIA$\rightarrow$FTSE} & \multicolumn{2}{c}{FTSE$\rightarrow$DJIA} \\
 $\tau$ & Linear SQR & Cubic SQR &  Linear SQR & Cubic SQR  \\  \hline
 0.1  & 0.175 (-0.089,0.394) & 0.176 (-0.093,0.359) & 0.097 (-0.014,0.233) & 0.109 (-0.007,0.231)   \\
 0.5  & 0.517 (0.328,0.626)  & 0.505 (0.332,0.612)  & -0.076 (-0.240,0.124) & -0.085 (-0.243,0.127)  \\
 0.9  & 0.652 (0.480,0.777)  & 0.649 (0.462,0.773)  & -0.217 (-0.432,-0.044) & -0.223 (-0.434,-0.065) \\
\hline
\end{tabular} 
}
\end{center}
{\footnotesize 
\begin{center}
\begin{minipage}{6in}
Parentheses contain the lower and upper limits of the 90\% confidence band.
\end{minipage}
\end{center}
}
\centering
\vspace{0.1in}
\includegraphics[height=2.75in,angle=-90]
{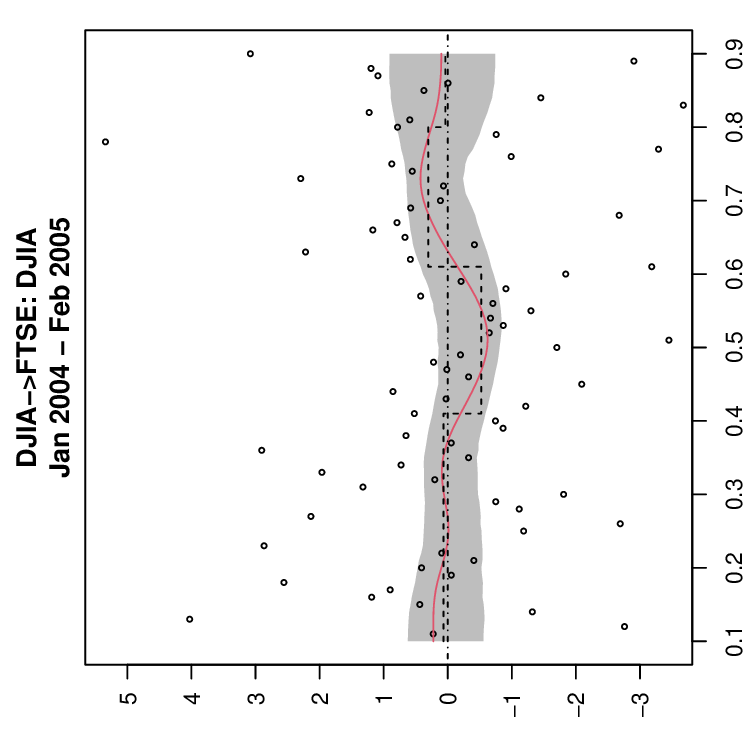}
\includegraphics[height=2.75in,angle=-90]
{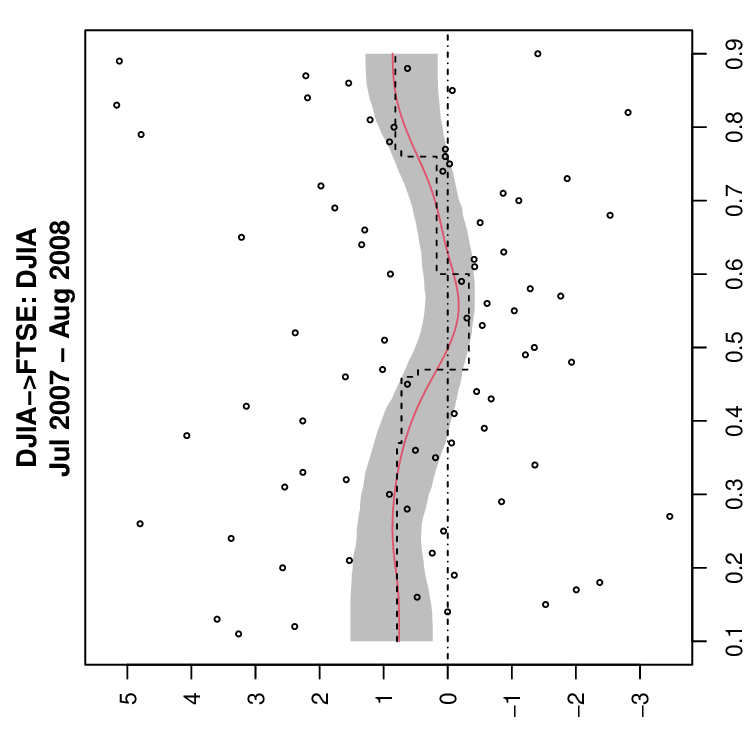} 
\caption{Derivative of cubic SQR estimate of $c(\cdot)$ with a 90\% bootstrap confidence band for DJIA$\rightarrow$FTSE from January 2004 to February 2005 (left) and July 2007 to August 2008 (right). 
Dashed line, derivative of linear SQR estimate; circles, finite difference of QR estimate. } 
\label{fig:fin2c}
\end{figure}

The increase of the least-squares estimate (dashed line) in Figure~\ref{fig:fin2b}(a)
in comparison with  Figure~\ref{fig:fin1b}(a) suggests an overall  strengthening 
of the Granger causality DJIA$\rightarrow$FTSE in the second period.  
However,  according to  the SQR estimates in Figure~\ref{fig:fin2b}(a), 
this strengthening is not distributed equally across all  quantiles.
Indeed,  the magnitude of $c(\tau)$  grows with the increase of $\tau$ at high quantiles ($\tau > 0.8$) but falls 
with the decrease of $\tau$ at low quantiles $(\tau < 0.4$). This is in contrast with the largely flat pattern in Figure~\ref{fig:fin1b}(a). A comparison of their derivatives  shown in Figure~\ref{fig:fin2c} further confirms 
this difference between the two periods.
Table~\ref{tab:fin2} shows that the SQR estimates are 0.652 and 0.649 at $\tau=0.9$ versus 0.175 and 0.176  at $\tau=0.1$. Such large disparity is not exhibited  in  Table~\ref{tab:fin1}. 
These results suggest that DJIA in the second period becomes more influential on big rallies of FTSE relative to 
its influence  on big declines  of FTSE.

With regard to the Granger causality FTSE$\rightarrow$DJIA,  one can see, by comparing Figure~\ref{fig:fin2b}(b) and Table~\ref{tab:fin2} with Figure~\ref{fig:fin1b}(b) and Table~\ref{tab:fin1}, that the decreasing pattern of the SQR estimates remains in the second period, but the assertion of possible causality at low quantiles is no longer supported by the SQR estimates. as judged by the confidence limits.

\medskip

Finally, we note that the supplementary material contains additional examples of application. In the following
we only highlight some computational issues encountered in the analysis of the birth data. 
This example investigates the impact of various demographic characteristics and maternal behaviors on the birth weight of infants in the US. Similar data were used by Koenker (2005, p.~20) and by Li and Megiddo (2026).
 The large number of data records employed in this example ($n=20000$) poses some challenges to the computation of the SQR solutions discussed in this article. For linear SQR, 
we find it beneficial to use the LP solver {\tt rq.fit.sfn} instead of {\tt rq.fit.fnb2} for reduced computer time. For cubic SQR, the QP solver {\tt solve\underline{ }piqp}  crashes without an error message. The QP solver {\tt solve\underline{ }osqp} is able to compute the cubic SQR estimates, but the computer time is much longer than that for computing the linear SQR estimates, which makes the construction of bootstrap confidence bands especially time-consuming.
In addition, we find it helpful to rescale the variables so that their coefficients in the SQR problem have roughly the same dynamic range across the quantiles suitable for regularization with a single smoothing parameter.  More details  can be found in the supplementary material.

\section{Concluding Remarks}

This article considers the problem of estimating the coefficients in linear quantile regression models as smooth functions of the quantile level by the spline quantile regression (SQR) method of Li and Megiddo (2026).
The SQR method represents the coefficients by spline functions and produces an estimate of the functional coefficients  by solving a penalized quantile regression problem  with penalty on their roughness at a given set of quantiles.. This article extends the SQR method  by
introducing  additional choices for the  functional space and  the roughness penalty. The combination of the $\ell_2$-norm of the second derivatives  as the penalty with the space of cubic splines yields the cubic SQR problem that can be formulated as a quadratic program (QP). The combination of the total variation of the first derivatives as the penalty  with the space of linear splines leads to  the linear SQR problem that  can be solved as a linear program (LP). Both cubic and linear SQR solutions are shown to be optimal in a space of smooth functions beyond their respective space of cubic or linear splines with fixed knots at the given set of quantiles. 

The SQR method offers a concise representation of the regression coefficients across all quantiles. 
It enables an assessment of the derivatives of the regression coefficients as functions of the quantile level.
Our simulation study demonstrates that the SQR method also has the capability of producing more 
accurate estimates in comparison with the conventional quantile regression method performed 
independently at each quantile even  with ad hoc post-smoothing.  

It remains desirable to devise special algorithms for the cubic SQR problem by taking full advantage of the 
special structure of its QP formulation for accelerated convergence without excessive consumption of computer memory. Alternative complexity measures that vary smoothly with the smoothing parameter in the SQR problem are also desirable. Rescaling the explanatory variables to equalize the dynamic range of the regression coefficients across quantiles helps improve the effectiveness of regulating their smoothness  using a single smoothing parameter. 
Extension of the SQR method to include multiple smoothing parameters may enhance its flexibility when  dealing with complex situations where different  coefficients demand different degrees of smoothness. Additional lasso-type penalties on the spline parameters as in Yoshida (2021) may further enhance the capability of SQR when a sparse model is called for.

It is well known that the ordinary QR solutions  may sometimes violate 
the monotonicity of conditional quantiles (Koenker 2005, p.\ 55;  Neocleous and Portnoy 2008). 
Remedies to this quantile-crossing problem have been proposed under various
conditions (He 1997; Wu and Liu 2009; Bondell et al.\ 2010; Ando and Li 2025; Szendrei et al.\ 2025). 
For the SQR estimators given by (\ref{sqr}), it is conceivable,  as noted in Li and Megiddo (2026), 
that a constraint of the form $\bx_t^T \hat{\bmbeta}(\tau_{\ell+1}) \ge \bx_t^T \hat{\bmbeta}(\tau_{\ell})$
$(\ell=1,\dots,L-1;t=1,\dots,n)$ could be imposed 
without altering the respective QP or LP nature of the problem, except for an increase in the computational burden. 
This constraint guarantees the monotonicity of the solution at the observed design points 
and the finite set of quantiles. Extension of this monotonicity  beyond the observed 
design points and for all quantiles is an interesting problem for future research.

\section*{References}

{\footnotesize
\begin{description} 

\item
Ando, T., and Li, K.-C. (2025). Simplex quantile regression without crossing.
{\it Annals of Statistics}, 53, 144--169.

\item
Andriyana, Y., Gijbels, I., and Verhasselt, A. (2014). P-splines quantile regression estimation in varying
coefficient models. {\it Test}, 23, 153--194. 

\item
Bondell, H., Reich, B., and Wang, H. (2010).
Noncrossing quantile regression curve estimation.
{\it Biometrika}, 97, 825--838.

\item
Brezis, H. (2019). Regularized interpolation driven by total variation. {\it Analysis in Theory and Applications},
35, 335--354.

\item
B\"{u}hlmann, P. (2002). Bootstrap for times series, {\it Statistical Science}, 17, 52--72.

\item
Cheng, H., Wang, Y., Wang, Y., and Yang, T. (2022). Inferring causal interactions in financial markets using conditional Granger causality based on quantile regression, {\it Computational Economics}, 59, 719--748.

\item
 Chernozhukov, V., Fern\'{a}ndez-Val, I., and Melly B. (2022).
 Fast algorithms for the quantile regression process. {\it Empirical Economics}, 62, 7--33.

\item
Clette, F., Svalgaard, L., Vaquero, J., and Cliver, E. (2014). Revisiting the sunspot number. A 400-year perspective on the solar cycle. {\it Space Science Reviews}, 186, 35--103. 

\item
de Boor, C. (1963). Best approximation properties of spline functions of odd degree,
{\it Journal of Mathematics and Mechanics}, 12, 747--749.

\item
Dorn, W. (1960). Duality in quadratic programming. {\it Quarterly of Applied Mathematics}, 18, 155--162.

\item
Friedlander, M., and Orban, D. (2012).  A primal-dual regularized interior-point method for 
convex quadratic programs. {\it Mathematical Programming Computation}, 4, 71--107.

\item
Granger, C. (1969). Causal relations by econometric models and cross-spectral methods. {\it Econometrica}, 37, 424-438.

\item
Hastie, T., and Tibshirani, R. (1990). {\it Generalized Additive Models}, Chapter 2.
New York: Chapman \& Hall.

\item
He, X. (1997). Quantile curves without crossing. {\it American Statistician}, 51, 186--192.

\item
Hu\'{e}, S., and Leymarie, J. (2025). Granger-causality in quantiles and financial interconnectedness.
\url{https://dx.doi.org/10.2139/ssrn.5861723}

\item
Karpman, K., Lahiry, S., Mukherjee, D., and Basu, S. (2022).
Exploring financial networks using quantile regression and Granger causality,  	arXiv:2207.10705.

\item
Koenker, R. (2005). {\it Quantile Regression}. Cambridge University Press, Cambridge.

\item
Koenker, R., and Bassett, G. (1978). Regression quantiles. {\it Econometrica}, 46, 33--50.

\item
Koenker, R., and D'Orey, V. (1987). Algorithm as 229: Computing regression quantiles. 
{\it Journal of the Royal Statistical Society Series C}, 36, 383--393.

\item
Koenker, R., and Ng, P. (2005).
A Frisch-Newton algorithm for sparse quantile regression.
{\it Acta Mathematicae Applicatae Sinica},  21, 225--236.

\item
Koenker, R., Ng, P., and Portnoy, S. (1994). Quantile smoothing splines. {\it Biometrika}, 81, 673--680.

\item
Koenker, R., and Xiao, Z. (2006). Quantile autoregression, {\it Journal of the American Statistical Association}, 101, 980--990.

\item
Lahiri, S. (2003). {\it Resampling Methods for Dependent Data}. New York: Springer.

\item 
Li, T.-H., and Megiddo, N.  (2026). Spline quantile regression. {\it Journal of Statistical Theory and Practice}, 20, 30. 
\url{https://doi.org/10.1007/s42519-026-00545-8}.

\item
Narasimhan, B., Schwan, R., Jiang, Y., Goulart, P., Kuhn, D., and Jones, C. (2023). Package ‘piqp’ (Version 0.2.2).
\url{https://cran.r-project.org/web//packages/piqp/piqp.pdf}.

\item
Neocleous, T.,  and Portnoy, S.  (2008). On monotonicity of regression quantile functions. 
{\it Statistics \& Probability Letters},  78, 1226--1229.

\item
Oh, H.-S., Lee, T., and Nychka, D. (2011).
Fast nonparametric quantile regression with arbitrary smoothing methods.
{\it Journal of Computational and Graphical Statistics}, 20, 510--526.

\item
Park, S., and He, X. (2017). Hypothesis testing for regional quantiles. 
{\it Journal of Statistical Planning and Inference}, 191, 13--24.

\item 
Pinkus, A. (1988). On smoothest interpolants. {\it SIAM Journal of Mathematical Analysis}, 19, 1431--1441.

\item
Portnoy, S., and Koenker, R. (1997). The Gaussian hare and the Laplacian tortoise: computability of squared-error
versus absolute-error estimators. {\it Statistical Science}, 12, 279--300.

\item
R Core Team (2024). R: A language and environment for statistical
  computing. R Foundation for Statistical Computing, Vienna,
  Austria. \url{https://www.R-project.org/}.

\item
Schwan, R., Jiang, Y., Goulart, P., Kuhn, D., and Jones, C. (2023). PIQP: A proximal interior-point quadratic 
programming solver. arXiv:2304.00290.

\item
Stellato, B., Banjac, G., Goulart, P., Bemporad, A., and Boyd, S. (2020). 
OSQP: An operator splitting solver for quadratic programs. {\it Mathematical Programming Computation}, 
12, 637--672. 

\item
Stellato, B., Banjac, G., Goulart, P., Boyd, S., Anderson, E., Bansal, V., and Narasimhan, B. 
(2024). Package `osqp' (Version 0.6.3.3). \url{https://cran.r-project.org/web/packages/osqp/osqp.pdf}.

\item
Stoer, J., and Bulirsch, R. (2002). {\it Introduction to Numerical Analysis}, 3rd Edition, Chapter 2.
New York: Springer.

\item
Szendrei, T., Bhattacharjee, A., and Schaffer, M. (2025). Fused LASSO as non-crossing quantile
regression. arXiv:2403.14036.

\item
Troster, V. (2018). Testing for Granger-causality in quantiles.
{\it Econometric Reviews}, 37, 850--866.

\item
Wahba, G. (1975). Smoothing noisy data with spline functions.
{\it Numerische Mathematik}, 24, 383--393.

\item
Wu, Y., and Liu, Y. (2009). Stepwise multiple quantile regression estimation
using non-crossing constraints. {\it Statistics and Its Interface}, 2, 299--310.

\item
Yoshida, T. (2021). Quantile function regression and variable selection for sparse models. 
{\it Canadian Journal of Statistics}, 49, 1196--1221.

\end{description}
}

\newpage

\section*{Appendix: R Functions}

The following functions are implemented in the R package `qfa' (version $\ge$ 5.0)
available at \url{https://cran.r-project.org} and \url{https://github.com/thl2019/QFA}. 

\begin{itemize}
\item {\tt sqr}: a function that computes SQR solutions on a set of quantile levels 
with user-supplied or automatically selected smoothing parameter {\tt spar}. Solver options for 
linear SQR are  ``{\tt fnb}'' (default) for the  interior-point method of Koenker et al.\  (1994)
and  ``{\tt sfn}'' for a sparse-matrix implementation of the interior-point method  (Koenker and Ng 2005).
Solver options for cubic SQR are ``{\tt piqp}'' (default) for the proximal interior-point method of Schwan et al.\  (2023)  and  ``{\tt osqp}'' for the first-order method of Stellato et al.\ (2020). 
\item {\tt boot.sqr}: a function that generates bootstrap samples of the SQR estimates based on the output of {\tt sqr}.
\item {\tt sqr.plot}: a function that plots SQR and QR estimates as functions of  quantile level.
\item {\tt sqr\underline{ }deriv.plot}:  a function that plots the derivatives of SQR estimates as functions of quantile level.
\item {\tt sqr1.fit}: a low-level function that computes the linear SQR solution (SQR1)  on a set of quantile levels 
with  user-supplied  smoothing parameter  {\tt spar}.
\item {\tt sqr3.fit}: a low-level function that computes the cubic SQR solution (SQR3) on a set of quantile levels with  user-supplied   smoothing parameter  {\tt spar}.
\item {\tt sqr.fit}: a low-level function that computes the SQR solution in Li and Megiddo (2026) 
\end{itemize}

\end{document}

%% file: myfonts.tex
\DeclareMathAlphabet{\mybm}{OT1}{ptm}{b}{it}

\usepackage{mathptmx}
\usepackage{euscript}
\usepackage{latexsym}
\usepackage{amsmath}
\usepackage{amsfonts}
\usepackage{amssymb}
\usepackage{ulem}
\usepackage{amsthm}

\newtheorem{thm}{Theorem}

\newtheorem{lem}{Lemma}

{\theoremstyle{remark} 
}

{\theoremstyle{definition} 

}

\newcommand{\BIBO}{{\small \BIBO}}

\newcommand{\lzero}{\rightarrow 0}

\newcommand{\eq}{\begin{eqnarray*}}
\newcommand{\eqq}{\end{eqnarray*}}
\newcommand{\eqn}{\begin{eqnarray}}
\newcommand{\eqqn}{\end{eqnarray}}

\newcommand{\eqb}{\begin{align*}}
\newcommand{\eqqb}{\end{align*}}

\newcommand{\E}{\mathrm{E}}

\newcommand{\diag}{\mathrm{diag}}
\newcommand{\tr}{\mathrm{tr}}

\newcommand{\eqna}{\begin{align}}
\newcommand{\eqqna}{\end{align}}

\newcommand{\vect}{\mathrm{vec}}

\newcommand{\bA}{\mathbf{A}}

\newcommand{\bD}{\mathbf{D}}
\newcommand{\bC}{\mathbf{C}}

\newcommand{\bI}{\mathbf{I}}

\newcommand{\bG}{\mathbf{G}}

\newcommand{\bP}{\mathbf{P}}
\newcommand{\bQ}{\mathbf{Q}}

\newcommand{\bX}{\mathbf{X}}
\newcommand{\ba}{\mathbf{a}}
\newcommand{\bb}{\mathbf{b}}
\newcommand{\bc}{\mathbf{c}}
\newcommand{\bd}{\mathbf{d}}

\newcommand{\bx}{\mathbf{x}}
\newcommand{\by}{\mathbf{y}}
\newcommand{\bz}{\mathbf{z}}

\newcommand{\bu}{\mathbf{u}}
\newcommand{\bv}{\mathbf{v}}
\newcommand{\br}{\mathbf{r}}
\newcommand{\bs}{\mathbf{s}}

\newcommand{\bw}{\mathbf{w}}

\newcommand{\bbR}{\mathbb{R}}

\newcommand{\0}{{\bf 0}}
\newcommand{\1}{{\bf 1}}

\newcommand{\cF}{\EuScript{F}}

\newcommand{\cT}{\EuScript{T}}

\newcommand{\ep}{\epsilon}

\newcommand{\sig}{\sigma}

\newcommand{\del}{\delta}

\newcommand{\bmdel}{\pmb{\delta}}
\newcommand{\bmbeta}{\pmb{\beta}}
\newcommand{\bmgam}{\pmb{\gamma}}

\newcommand{\bmphi}{\pmb{\phi}}

\newcommand{\bmth}{{\pmb{\theta}}}

\newcommand{\bmxi}{\pmb{\xi}}

\newcommand{\bmeta}{\pmb{\eta}}
\newcommand{\bmzeta}{\pmb{\zeta}}

\newcommand{\bmlam}{\pmb{\lambda}}

\newcommand{\bOm}{\pmb{\Omega}}

\newcommand{\bPhi}{\pmb{\Phi}}

